\newtheorem{theorem}{Theorem}
\newtheorem{lemma}{Lemma}
\newtheorem{definition}{Definition}
\newtheorem{assumption}{Assumption}
\let\NAT@parse\undefined
\crefname{figure}{Figure}{Figures} 
\crefname{equation}{}{} 
\crefname{assumption}{Assumption}{Assumptions}
\newcommand{\bbR}{\mathbb{R}}
\newcommand{\bbS}{\mathbb{S}}
\newcommand{\calU}{\mathcal{U}}
\newcommand{\calY}{\mathcal{Y}}
\newcommand{\calX}{\mathcal{X}}
\newcommand{\calC}{\mathcal{C}}
\newcommand{\calS}{\mathcal{S}}
\newcommand{\calT}{\mathcal{T}}
\newcommand{\linTildeU}{\tilde{\calU}_u}
\title{\LARGE \bf Non-Convex Feedback Optimization with Input and Output Constraints}
\author{Verena H\"{a}berle, Adrian Hauswirth, Lukas Ortmann, Saverio Bolognani, Florian D\"{o}rfler
\thanks{The research leading to this work was supported in part by the Swiss Federal Office of Energy grant \#SI/501708 UNICORN.}
\thanks{The authors are with the Automatic Control Laboratory, ETH Z\"urich, Physikstrasse 3, 8092 Z\"urich, Switzerland.}
\thanks{Email: \{verenhae,hadrian,ortmannl,bsaverio,dorfler\}@ethz.ch.}
}
\begin{document}

\maketitle
\thispagestyle{empty}
\pagestyle{empty}

\begin{abstract}
In this paper, we present a novel control scheme for feedback optimization. That is, we propose a discrete-time controller that can steer a physical plant to the solution of a constrained optimization problem without numerically solving the problem. Our controller can be interpreted as a discretization of a continuous-time projected gradient flow. Compared to other schemes used for feedback optimization, such as saddle-point schemes or inexact penalty methods, our control approach combines several desirable properties: it asymptotically enforces constraints on the plant steady-state outputs, and temporary constraint violations can be easily quantified. Our scheme requires only reduced model information in the form of steady-state input-output sensitivities of the plant. Further, global convergence is guaranteed even for non-convex problems. Finally, our controller is straightforward to tune, since the step-size is the only tuning parameter.
\end{abstract}

\section{Introduction}\label{sec:introduction}
In recent years, the design of feedback controllers that steer the steady state of a physical plant to the solution of a constrained optimization problem has garnered significant interest both for its theoretical depth~\cite{jokicconstrainedsteadystateregulation2009,nelson_integral_2018,colombinoOnlineOptimizationFeedback2019, lawrence_optimal_2018} and potential applications. In particular, while the historic roots of \emph{feedback optimization} trace back to process control~\cite{garciaOptimaloperationintegrated1984,costelloModifierAdaptationConstrained2014} and communication networks~\cite{kellyRatecontrolcommunication1998, lowInternetcongestioncontrol2002}, recent efforts have centered around online optimization of power grids~\cite{li2015connecting,dallaneseOptimalPowerFlow2018,hauswirthOnlineoptimizationclosed2017, molzahnSurveyDistributedOptimization2017}.

We adopt the perspective that feedback optimization emerges as the interconnection of an optimization algorithm such as gradient descent (formulated as an open system) and a physical plant with well-defined steady-state behavior. This is in contrast, for example, to~\cite{lawrence_optimal_2018} that adopts an output-regulation viewpoint or extremum-seeking which is completely model-free~\cite{ariyurRealtimeoptimization2003}. In particular, for our purposes, we assume that the plant is stable with fast-decaying dynamics and the steady-state input-to-output map $y = h(u)$ is well-behaved (Fig.~\ref{fig:blockdiagramm}). This assumption is motivated by previous work on timescale separation in these setups~\cite{mentaStabilityDynamicFeedback2018, hauswirthTimescaleSeparationAutonomous2019}. 

The critical aspect of feedback optimization is that, instead of relying on a full optimization model, the algorithms take advantage of measurements of the system output. This entails that the system model $h$ does not need to be known explicitly, nor does it need to be evaluated numerically. Instead, only information about the steady-state sensitivities $\nabla h$ is required. This renders feedback optimization schemes inherently more robust against disturbances and uncertainties than ``feedforward'' numerical optimization.

A particular focus in feedback optimization is the incorporation of (unilateral) constraints on inputs and steady-state outputs. The former can be enforced directly by projection or by exploiting physical saturation and using anti-windup control~\cite{hauswirthImplementationProjectedDynamical2020, hauswirthAntiWindupApproximationsOblique2020}. Output constraints, however, cannot be enforced directly, especially when the map $h$ is unknown. Previous works either employ inexact penalty methods~\cite{tangRealTimeOptimalPower2017,hauswirthOnlineoptimizationclosed2017}, treating the output constraints as soft constraints, or saddle-point schemes~\cite{colombinoOnlineOptimizationFeedback2019,dallaneseOptimalPowerFlow2018,li2015connecting}, ensuring the constraints to be satisfied asymptotically. The latter, however, exhibit oscillatory behavior, are difficult to tune, and do not come with strong guarantees for non-convex problems~\cite{goebelStabilityrobustnesssaddlepoint2017,cherukuriRoleConvexitySaddlePoint2017}.

\begin{figure}[t] 
        \centering
        \includegraphics[scale=0.78]{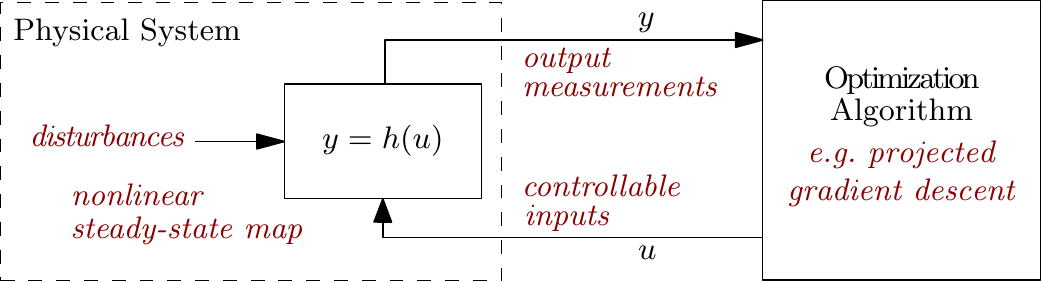}
        \caption{Block diagram of the feedback optimization setup.}
        \vspace{-3mm}
        \label{fig:blockdiagramm}
\end{figure}

In this paper, we present a new discrete-time controller to enforce output constraints in feedback optimization. Our scheme works by projecting gradient iterates onto a linearization of the feasible set around the current state and then applying them as set-points to the system.
For the main result, the global convergence of our scheme, we take inspiration from numerical algorithms like sequential quadratic programming (SQP)~\cite[Ch. 10]{sherali2006nonlinear},\cite[Ch. 18]{nocedal2006numerical},\cite{torrisi2018projected}. However, these algorithms require second-order information on $h$ or the exact knowledge of $h$ for line-searching, which is not always available in our setup. Hence, we cannot rely on line-search techniques and instead have to establish convergence for fixed step-sizes, ensuring that our optimization approach can be implemented as a time-invariant feedback controller.

Compared to saddle-point schemes, our controller demonstrates that the integration of a dual variable is not required to guarantee constraint satisfaction with zero asymptotic error. Further, our scheme exhibits a benign convergence behavior without oscillations, convergence is guaranteed for non-convex problems, and tuning is restricted to one parameter.

The rest of the paper is structured as follows:
In Section~\ref{sec:preliminaries}, we fix the notation and recall some preliminary technical results. In Section~\ref{sec:discrete_time_proj_grad}, we present our new discrete-time controller and state our main convergence result. The proof is laid out in Section~\ref{sec:proof_global_conv}. Finally, we give a numerical example in Section~\ref{sec:numerical_simulation} and discuss open questions in Section~\ref{sec:conclusion}. In the appendix, we indicate how our discrete-time controller is connected to a continuous-time projected gradient flow.

\section{Preliminaries}\label{sec:preliminaries}
For $\bbR^p$, $\langle\cdot,\cdot\rangle$ denotes the Euclidean inner product and $||\cdot||$ its induced 2-norm. The non-negative orthant of $\bbR^p$ is written as $\bbR^p_{\geq 0}$ and the set of symmetric positive definite matrices of size $p\times p$ is denoted by $\bbS^p_+$. Any $G \in \bbS^p_+$ induces a 2-norm defined as $||v||_G:=\sqrt{v^T G v}$ for all $v \in \bbR^p$. Given a set $\calC \subset \bbR^p$, a map $G: \calC \rightarrow\bbS^p_+$ is called a \emph{metric} on $\calC$.

For a continuously differentiable function $f:\bbR^p\rightarrow\bbR^q$, $\nabla f(x)\in\bbR^{q\times p}$ denotes the \emph{Jacobian} of $f$ at $x$. The map $f$ is \emph{globally $L$-Lipschitz continuous} if for all $x,y\in \bbR^p$ and some $L>0$, it holds that $||f(x)-f(y)||\leq L||x-y||$.

To establish our main convergence result in Section~\ref{sec:proof_global_conv}, we require the so-called \emph{Descent Lemma}~\cite[Prop A.24]{bertsekas1999nonlinear}:
\begin{lemma}\label{descent_lemma}
    Given a continuously differentiable function $f:\mathbb{R}^p\rightarrow\mathbb{R}$ with $L$-Lipschitz derivative $\nabla f$, for all $x, z \in \bbR^p$ it holds that $f(z)\leq f(x)+ \nabla f(x)(z - x) +\tfrac{L}{2}||z - x||^2$.
\end{lemma}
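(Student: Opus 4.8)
The plan is to prove the inequality by integrating the derivative of $f$ along the line segment joining $x$ and $z$ and then controlling the resulting remainder term with the Lipschitz bound on $\nabla f$. Concretely, I would fix $x, z \in \bbR^p$ and introduce the scalar auxiliary function $\phi : [0,1] \to \bbR$ defined by $\phi(t) := f(x + t(z - x))$. Since $f$ is continuously differentiable, $\phi$ is continuously differentiable on $[0,1]$ with $\phi'(t) = \nabla f(x + t(z - x))(z - x)$, where the product denotes the usual action of the $1 \times p$ Jacobian $\nabla f$ on the vector $z - x$.

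By the fundamental theorem of calculus, $f(z) - f(x) = \phi(1) - \phi(0) = \int_0^1 \phi'(t)\, dt = \int_0^1 \nabla f(x + t(z - x))(z - x)\, dt$. Subtracting the first-order term $\nabla f(x)(z-x)$ and using linearity of the integral, this yields $f(z) - f(x) - \nabla f(x)(z - x) = \int_0^1 [\nabla f(x + t(z - x)) - \nabla f(x)](z - x)\, dt$.

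The next step is to bound the right-hand side in absolute value. By the Cauchy--Schwarz inequality the integrand is at most $||\nabla f(x + t(z - x)) - \nabla f(x)||\, ||z - x||$, and by the $L$-Lipschitz continuity of $\nabla f$ applied to the points $x + t(z - x)$ and $x$ this is in turn at most $L\, ||t(z - x)||\, ||z - x|| = L t\, ||z - x||^2$. Integrating over $t \in [0,1]$ gives $|f(z) - f(x) - \nabla f(x)(z - x)| \leq L\, ||z - x||^2 \int_0^1 t\, dt = \tfrac{L}{2} ||z - x||^2$, and retaining only the upper estimate produces the claimed bound $f(z) \leq f(x) + \nabla f(x)(z - x) + \tfrac{L}{2} ||z - x||^2$.

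Since this is a classical estimate (essentially the argument in~\cite[Prop.~A.24]{bertsekas1999nonlinear}), I do not anticipate any serious obstacle. The only point requiring a bit of care is the passage from the vector-valued mean-value identity to a scalar bound, namely applying Cauchy--Schwarz to the term $[\nabla f(x + t(z - x)) - \nabla f(x)](z - x)$ before invoking the Lipschitz property, and correctly tracking the factor $t$ that arises from the displacement $||(x + t(z - x)) - x|| = t\, ||z - x||$ inside the integral. All remaining manipulations are routine.
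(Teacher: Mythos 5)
Your proof is correct and is precisely the classical argument behind this lemma, which the paper does not reprove but simply cites from \cite[Prop.\ A.24]{bertsekas1999nonlinear}: parametrize the segment, apply the fundamental theorem of calculus, and bound the remainder via Cauchy--Schwarz and the Lipschitz property of $\nabla f$, with the factor $\tfrac{1}{2}$ arising from $\int_0^1 t\,dt$. No issues.
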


\subsection[Nonlinear Optimization Perturbation Analysis]{Nonlinear Optimization \& {Perturbation Analysis}}\label{sec:nonlinear_optimization}

In this paper, we often consider feasible sets of the form
\begin{align}\label{eq:cstr_set}
    \calX:=\{x\in\mathbb{R}^p\,|\,g(x)\leq 0 \} \, ,
\end{align} 
where $g:\mathbb{R}^p\rightarrow\mathbb{R}^q$ is continuously differentiable. Let $\mathbf{I}^\calX_x:=\{i\,|\,g_i(x)=0\}$ denote the \emph{index set of active inequality constraints of $\calX$ at~$x$} and let $\Bar{\mathbf{I}}^\calX_x:=\{i\,|\,g_i(x)<0\}$ be the index set of \emph{inactive} inequality constraints at $x$.

\begin{definition}[LICQ]\label{definition:LICQ_nonlinear_optimization_problem}
    Given $\calX \subset \bbR^p$ as in~\eqref{eq:cstr_set}, the \emph{linear independence constraint qualification} (LICQ) is said to hold at $x\in\calX$, if the matrix $\nabla g_{\mathbf{I}^\calX_x}(x)$ has rank $|\mathbf{I}^\calX_x|$.
\end{definition}

For a continuously differentiable function $\Psi:\mathbb{R}^p\rightarrow\mathbb{R}$ and $\calX \subset \bbR^p$ as in~\eqref{eq:cstr_set}, consider the constrained problem
\begin{align}\label{eq:constrained_optimization_problem}
  \underset{x}{\text{minimize}} \quad \Psi(x)  \quad
    \text{subject to} \quad x \in \calX \,.
\end{align}
The \emph{Lagrangian} of \eqref{eq:constrained_optimization_problem} is defined as $L(x,\mu):=\Psi(x)+\mu^Tg(x)$ for all $x \in \bbR^p$ and all \emph{Lagrange multipliers} $\mu\in\bbR^q_{\geq 0}$. Recall the \emph{first-order optimality (KKT) conditions} for~\eqref{eq:constrained_optimization_problem}:

\begin{theorem}\cite[Ch. 11.8]{luenberger1984linear}\label{theorem:KKT_condition_nonlinear_optimization_problem} If $x^\star\in\calX$ is a local solution of~\eqref{eq:constrained_optimization_problem} and LICQ holds at $x^\star$, there exists a unique $\mu^\star\in\bbR^q_{\geq 0}$ such that
$\nabla_x L(x^\star,\mu^\star)=0$ and $\mu^\star_i=0$ hold for all $i\in\mathbf{\Bar{I}}_{x^\star}^\calX$.
\end{theorem}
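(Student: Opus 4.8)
The plan is to derive the KKT system from the elementary fact that a local minimizer admits no first‑order feasible descent direction, and then to extract the multiplier $\mu^\star$ by Farkas' lemma; LICQ enters only to supply a strictly interior linearized direction and to force uniqueness. First I would localize to the active set: since $g_i(x^\star)<0$ for $i\in\Bar{\mathbf{I}}^\calX_{x^\star}$, these constraints stay inactive on a neighborhood of $x^\star$ and play no role locally.

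The core step is to show that, by local optimality of $x^\star$, the system
\[
\nabla\Psi(x^\star)d<0,\qquad \nabla g_i(x^\star)d\le 0\quad\text{for all } i\in\mathbf{I}^\calX_{x^\star}
\]
has no solution $d\in\bbR^p$. Assume it had a solution $d$. Because LICQ holds, $\nabla g_{\mathbf{I}^\calX_{x^\star}}(x^\star)$ has full row rank, hence is surjective, so there is $w\in\bbR^p$ with $\nabla g_i(x^\star)w=-1$ for every active $i$; then $\bar d:=d+\varepsilon w$ satisfies $\nabla g_i(x^\star)\bar d<0$ for all active $i$ and, for $\varepsilon>0$ small, still $\nabla\Psi(x^\star)\bar d<0$. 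A first‑order Taylor expansion of $g_i$ and $\Psi$ along the segment $x^\star+t\bar d$ shows that for all sufficiently small $t>0$ one has $g_i(x^\star+t\bar d)<0$ for every $i$ (active $i$ because $g_i(x^\star)=0$ and the linear term is strictly negative; inactive $i$ by continuity) and $\Psi(x^\star+t\bar d)<\Psi(x^\star)$, contradicting local optimality. Note that only the surjectivity (i.e.\ the Mangasarian--Fromovitz property implied by LICQ) was used here.

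Next I would invoke Farkas' lemma: infeasibility of the system above is equivalent to the existence of scalars $\mu_i\ge 0$, $i\in\mathbf{I}^\calX_{x^\star}$, with $\nabla\Psi(x^\star)+\sum_{i\in\mathbf{I}^\calX_{x^\star}}\mu_i\nabla g_i(x^\star)=0$. Completing $\mu$ by $\mu_i:=0$ for $i\in\Bar{\mathbf{I}}^\calX_{x^\star}$ yields $\mu\in\bbR^q_{\geq 0}$ with $\nabla_x L(x^\star,\mu)=\nabla\Psi(x^\star)+\mu^T\nabla g(x^\star)=0$ and $\mu_i=0$ on the inactive set, which is exactly the asserted first‑order condition. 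For uniqueness, if $\mu$ and $\mu'$ both qualify then $\sum_{i\in\mathbf{I}^\calX_{x^\star}}(\mu_i-\mu'_i)\nabla g_i(x^\star)=0$ with all other components of $\mu,\mu'$ vanishing; since the active gradients are linearly independent by LICQ, $\mu=\mu'$, so one may set $\mu^\star:=\mu$.

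The delicate point is the passage from a possibly non‑strict $d$ to a strictly interior $\bar d$: this is precisely what LICQ (already in its weaker Mangasarian--Fromovitz guise) buys us, and it is what lets the \emph{linear} segment $x^\star+t\bar d$ remain feasible to first order, so that no implicit‑function‑theorem construction of a curved feasible arc is needed. The only other care required is to cite the correct inhomogeneous, mixed strict/non‑strict form of Farkas' (equivalently, Motzkin's transposition) theorem; everything else is the routine Taylor estimates sketched above.
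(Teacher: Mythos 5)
The paper does not prove this statement at all---it is imported verbatim from Luenberger with only a remark (and a further citation) that LICQ yields uniqueness of the multiplier---so there is no in-paper argument to compare against line by line. Your proof is the classical first-order-necessary-conditions argument and it is correct: localizing to the active set, showing that local optimality forbids any linearized descent direction, using the full row rank of $\nabla g_{\mathbf{I}^\calX_{x^\star}}(x^\star)$ to tilt a non-strict direction $d$ into a strictly feasible $\bar d$ so that the straight segment $x^\star+t\bar d$ is feasible to first order, invoking Farkas to produce $\mu^\star\geq 0$ supported on the active set, and using linear independence of the active gradients for uniqueness. This route is genuinely different from (and more elementary than) the development in the cited source, which proceeds via regular points, the tangent space of the active-constraint surface, and implicit-function-theorem constructions of feasible curves; your observation that only the Mangasarian--Fromovitz consequence of LICQ is needed for existence, with full LICQ reserved for uniqueness, is exactly the right division of labor and matches the paper's side remark. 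Two small points of precision: the version of Farkas you need is just the standard homogeneous one (a vector lies in the finitely generated cone of the active gradients iff the associated strict/non-strict linear system is infeasible), not an inhomogeneous or Motzkin-type variant, so the closing caveat overstates the delicacy; and the contradiction framing in the descent-direction step is redundant---you are directly showing that any solution $d$ of the linearized system yields a feasible descent point, which is the implication Farkas then converts into the multiplier. Neither affects correctness.
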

In particular, the LICQ assumption guarantees the uniqueness (and boundedness) of the dual multipliers $\mu^\star$~\cite{wachsmuth2013licq}.

For the analysis of our proposed controller in 
\cref{sec:discrete_time_proj_grad}, we need to consider \emph{parametric} problems of the form
\begin{equation}\label{eq:parametric_problem}
\begin{split}
    \underset{x}{\text{minimize}} \quad & \Psi(x, \varepsilon) \\
    \text{subject to} \quad & x \in \calX(\varepsilon) \,,
\end{split}
\end{equation}
where $\calX(\varepsilon):=\{x\in \mathbb{R}^p\,|\,g(x,\varepsilon)\leq0\}$, and $\Psi:\mathbb{R}^p\times\mathcal{T}\rightarrow \mathbb{R}$ and $g:\mathbb{R}^p\times\mathcal{T}\rightarrow{\mathbb{R}^q}$ are parametrized in~$\varepsilon \in \mathcal{T}\subset\bbR^r$. Hence, the Lagrangian of~\eqref{eq:parametric_problem} is defined as $L(x, \mu, \varepsilon) := \Psi(x, \varepsilon)+\mu^T g(x, \varepsilon)$ for all $x \in \bbR^p$, $\mu \in \bbR^q_{\geq 0}$, and all $\varepsilon \in \calT$.

To prove our main result in Section~\ref{sec:proof_global_conv}, we will require solutions of~\eqref{eq:parametric_problem} to be continuous as a function of $\varepsilon$. For this purpose, we use \cite[Thm 2.3.2]{jittorntrum1978sequential} which, for convex problems with strongly convex objective, simplifies to the following:
\begin{theorem}\label{thm:continuity_solution_map} Consider~\eqref{eq:parametric_problem} and assume that $\Psi$ and~$g$ are twice continuously differentiable in $x$, and that $\Psi, g, \nabla_x \Psi, \nabla_x g, \nabla^2_{xx} \Psi$, and $\nabla^2_{xx} g$ are continuous in $\varepsilon$. Furthermore, for all $\varepsilon \in \calT$, let 
\begin{itemize}
    \item $\Psi$ be strongly convex in $x$,
    \item  $\calX(\varepsilon)$ be non-empty and convex in $x$, and
    \item LICQ be satisfied for all $x \in \calX(\varepsilon)$.
\end{itemize}
Then, there exist continuous functions $x^\star: \calT \rightarrow \bbR^p$ and $\mu^\star: \calT \rightarrow \bbR^q_{\geq 0}$ such that $x^\star(\varepsilon)$ is the unique global optimizer of~\eqref{eq:parametric_problem} for all $\varepsilon \in \calT$ and $\mu^\star(\varepsilon)$ is its Lagrange multiplier.
\end{theorem}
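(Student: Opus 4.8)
The plan is to reduce the claim to the Karush--Kuhn--Tucker (KKT) characterization of optimality, exploiting that under our hypotheses KKT is \emph{necessary} at $x^\star(\varepsilon)$ (by \cref{theorem:KKT_condition_nonlinear_optimization_problem}, since LICQ holds there) and, by convexity of $\Psi$ and $\calX(\varepsilon)$ in $x$, also \emph{sufficient} for global optimality. As a preliminary, note that for each fixed $\varepsilon$ the strongly convex function $\Psi(\cdot,\varepsilon)$ is coercive, so its minimum over the nonempty, closed, convex set $\calX(\varepsilon)$ is attained at a unique point $x^\star(\varepsilon)$; \cref{theorem:KKT_condition_nonlinear_optimization_problem} then supplies a unique $\mu^\star(\varepsilon)\in\bbR^q_{\geq 0}$ with $\nabla_x\Psi(x^\star(\varepsilon),\varepsilon)+\mu^\star(\varepsilon)^T\nabla_x g(x^\star(\varepsilon),\varepsilon)=0$ and $\mu^\star_i(\varepsilon)=0$ for inactive $i$. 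Both maps are thus well-defined, and it remains to prove continuity at an arbitrary $\varepsilon_0\in\calT$; by restricting to a neighborhood of $\varepsilon_0$ and using continuity of $\nabla^2_{xx}\Psi$, we may work with a uniform strong-convexity modulus.

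For continuity of $x^\star$, I would first show that the feasible-set map $\varepsilon\mapsto\calX(\varepsilon)$ is continuous as a set-valued map near $\varepsilon_0$: outer semicontinuity is immediate from continuity of $g$, while inner semicontinuity --- the delicate direction --- is where convexity of $\calX(\varepsilon)$ and the constraint qualification enter, the point being that a strictly feasible point (which LICQ, hence Mangasarian--Fromovitz, provides) can be perturbed to obtain feasible points of the nearby problems. Next, comparing $\Psi(x^\star(\varepsilon),\varepsilon)$ with the value of $\Psi(\cdot,\varepsilon)$ at its unconstrained minimizer --- which varies continuously in $\varepsilon$ by a strong-monotonicity estimate on $\nabla_x\Psi$ --- shows that $\{x^\star(\varepsilon)\}$ stays bounded for $\varepsilon$ near $\varepsilon_0$. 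Continuity of the constraint map, this local boundedness, continuity of $\Psi$, and uniqueness of the minimizer then yield continuity of $x^\star$ at $\varepsilon_0$ via a maximum-theorem (Berge) argument; equivalently, any sequence $x^\star(\varepsilon_k)$ with $\varepsilon_k\to\varepsilon_0$ has a convergent subsequence whose limit is feasible for the $\varepsilon_0$-problem and no worse than $x^\star(\varepsilon_0)$, hence equals it.

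For continuity of $\mu^\star$, I would use that once $x^\star(\cdot)$ is continuous, the active set $\mathbf{I}^{\calX(\varepsilon)}_{x^\star(\varepsilon)}$ is contained in $I_0:=\mathbf{I}^{\calX(\varepsilon_0)}_{x^\star(\varepsilon_0)}$ for $\varepsilon$ near $\varepsilon_0$ --- a constraint active at $(x^\star(\varepsilon),\varepsilon)$ but inactive at $(x^\star(\varepsilon_0),\varepsilon_0)$ would contradict continuity of $x^\star$ and $g$. Since $\nabla_x g_{I_0}(x^\star(\varepsilon_0),\varepsilon_0)$ has full row rank by LICQ, continuity gives a uniform lower bound on its smallest singular value on a neighborhood, so on each fixed active set $A\subseteq I_0$ the stationarity equation determines the nonzero components of $\mu^\star(\varepsilon)$ by a uniformly bounded least-squares formula in $\nabla_x g_A$ and $\nabla_x\Psi$. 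Extracting from any $\varepsilon_k\to\varepsilon_0$ a subsequence along which $\mu^\star(\varepsilon_k)$ converges and passing to the limit in the KKT relations (sign constraint, stationarity, complementary slackness) shows the limit is a multiplier for the $\varepsilon_0$-problem; uniqueness from LICQ forces it to be $\mu^\star(\varepsilon_0)$, and the subsequence argument upgrades this to convergence of the whole sequence.

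I expect the main obstacle to be the inner semicontinuity of $\calX(\cdot)$ at $\varepsilon_0$ --- that feasible points of the limit problem are limits of feasible points of nearby problems --- since this is exactly where convexity of $\calX(\varepsilon)$ and the constraint qualification are indispensable and where the argument would fail for general nonconvex $g$. A secondary nuisance is that $\calT$ is not assumed compact and strong convexity is only assumed pointwise in $\varepsilon$, so every ``uniform'' bound above (the strong-convexity modulus, the smallest singular value of $\nabla_x g_{I_0}$, the multiplier bound) must be obtained by localization near $\varepsilon_0$. Of course, all of this is subsumed by \cite[Thm 2.3.2]{jittorntrum1978sequential}, which even yields directional differentiability; the above merely makes explicit why the convex, strongly convex specialization validates that theorem's hypotheses and collapses its solution \emph{set} to a single-valued, continuous $x^\star$.
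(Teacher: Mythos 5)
Your argument is correct in outline, but it takes a genuinely different route from the paper. The paper's proof is essentially a two-line verification that the hypotheses of the cited perturbation result \cite[Thm 2.3.2]{jittorntrum1978sequential} hold globally --- feasibility and LICQ by assumption, uniqueness of the optimizer from strong convexity, KKT necessity from Theorem~\ref{theorem:KKT_condition_nonlinear_optimization_problem}, and the second-order sufficiency condition holding trivially by (strong) convexity --- after which continuity of $x^\star$ and $\mu^\star$ around each $\varepsilon$ is inherited from that theorem. You instead reprove continuity from scratch: a Berge-type maximum-theorem argument for $x^\star$ (outer/inner semicontinuity of $\varepsilon \mapsto \calX(\varepsilon)$, local boundedness from strong convexity, uniqueness of the minimizer) and a subsequence-plus-uniqueness argument for $\mu^\star$ (active-set containment, uniform lower bound on the singular values of $\nabla_x g_{I_0}$ from LICQ, passage to the limit in the KKT system). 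This buys self-containedness and avoids invoking second-order sensitivity machinery, at the cost of length; the paper's route is shorter and connects to the stronger conclusion (directional differentiability) available from the cited theorem. Two spots in your sketch deserve care. First, for inner semicontinuity you credit convexity of $\calX(\varepsilon)$, but since only the \emph{set} is assumed convex (not $g$ itself), the segment-toward-a-Slater-point argument does not directly give a quantitative feasibility margin; what actually does the work is that LICQ at every feasible point makes strictly feasible points dense in $\calX(\varepsilon_0)$, and those transfer to nearby $\varepsilon$ by continuity of $g$ --- convexity is not the operative hypothesis there. Second, since strong convexity and continuity in $\varepsilon$ are only assumed pointwise and $\calT$ need not be compact, your localization near $\varepsilon_0$ is indeed necessary, and the continuity of the unconstrained minimizer via strong monotonicity needs an a priori bound before the estimate closes; comparing directly with a fixed strictly feasible point of the $\varepsilon_0$-problem is cleaner. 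Neither issue is a gap in substance, only in the bookkeeping of where each hypothesis enters.
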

\begin{proof}
We show that, under the given convexity assumptions, the requirements for \cite[Thm 2.3.2]{jittorntrum1978sequential} are met globally. Namely, by assumption, \eqref{eq:parametric_problem} is feasible for all $\varepsilon\in\mathcal{T}$ and LICQ holds for all $x\in\calX(\varepsilon)$ and all $\varepsilon\in\mathcal{T}$. Hence, by strong convexity of $\Psi$,~\eqref{eq:parametric_problem} admits a unique (global) optimizer for all $\varepsilon\in\calT$. Therefore, the solution map ${\varepsilon} \mapsto x^\star({\varepsilon})$ and $\varepsilon \mapsto \mu^\star(\varepsilon)$ are single-valued. Moreover, for all $\varepsilon \in \calT$, the KKT conditions (Theorem~\ref{theorem:KKT_condition_nonlinear_optimization_problem}) are satisfied and the \emph{second order sufficiency conditions} hold (trivially) by (strong) convexity. It then follows from \cite[Thm 2.3.2]{jittorntrum1978sequential} that $x^\star$ and $\mu^\star$ are continuous around every $\varepsilon \in \calT$ and hence on all of $\calT$.\end{proof}

\section[Problem Formulation \& Main Result]{{Problem Formulation \& Main Result}}
\label{sec:discrete_time_proj_grad}
We consider the problem of steering a physical plant to a steady state that solves a pre-specified constrained optimization problem.
We assume that the plant is described by a continuously differentiable nonlinear steady-state input-to-output map $h:\mathbb{R}^p\rightarrow\mathbb{R}^n$.

For simplicity, we consider separate constraints on the input $u \in \bbR^p$ and the output $y \in \bbR^n$ given by the polyhedra
\begin{align*}
    \mathcal{U}:=\{u\in\mathbb{R}^p\,|\,Au\leq b\}
    \quad \text{and} \quad
    \mathcal{Y}:=\{y\in\mathbb{R}^n\,|\,Cy\leq d\} \, ,
\end{align*}
where $A\in\mathbb{R}^{q\times p},b \in\mathbb{R}^q, C \in\mathbb{R}^{l\times n}$, and $d\in\mathbb{R}^l$.\footnote{For many applications, separable constraint sets are sufficiently expressive. Future work will address generalized  constraints on inputs and outputs.}

Given a continuously differentiable cost function $\Phi:\mathbb{R}^p \times \bbR^n \rightarrow\mathbb{R}$, we hence consider the problem
\begin{equation}
\begin{aligned}\label{eq:constrained_opt}
    \underset{u,y}{\text{minimize}} \quad & \Phi(u, y) \\
    \text{subject to} \quad & y = h(u) \\
                            & u \in \calU, \, y \in \calY \,.
\end{aligned}
\end{equation}
Note that the nonlinearity of $h$ and the non-convexity of $\Phi$ generally make it intractable to find a global solution of \eqref{eq:constrained_opt}.

In general, we assume that \eqref{eq:constrained_opt} is feasible:
\begin{assumption}\label{ass:basic_feas}
    The feasible set of~\eqref{eq:constrained_opt} is non-empty, i.e.,
    \begin{align}\label{eq:tilde_U}
        \tilde{\calU} := \calU \cap h^{-1}(\calY) = \{ u \, | \, A u \leq b, \, C h(u) \leq d \} \neq \emptyset .
    \end{align} 
\end{assumption}

Local solutions of problems of the form~\eqref{eq:constrained_opt} can be reliably computed using standard methods from numerical optimization \cite{sherali2006nonlinear,nocedal2006numerical} if all problem parameters are known.

In our context, however, the map $h$ representing a physical plant might be subject to disturbances and not fully known. Therefore, a precomputed ``feedforward'' solution based on an estimate of $h$ lacks robustness.

We hence consider a closed-loop \emph{feedback optimization} scheme to solve~\eqref{eq:constrained_opt}. Namely, we propose an integral feedback controller defined as
\begin{equation} \label{eq:feedbackupdate} 
    u^+ = u + \alpha\,\widehat{\sigma}_\alpha(u,y) \,  \quad y = h(u) \,, 
\end{equation}
where $\alpha>0$ is a fixed step-size, $y = h(u)$ is the measured system output, and $\widehat{\sigma}_\alpha:\bbR^p\times\bbR^n\rightarrow\bbR^p$ is defined as
\begin{align}
   \widehat{\sigma}_\alpha(u,y) := \arg \min_{w\in\bbR^p} \, &\| w + G^{-1}(u) H(u)^T \nabla \Phi (u,y)^T  \|^2_{G(u)}
    \nonumber\\ \label{eq:sigma_hat}
   \text{subject to} \quad &A (u+ \alpha w)\leq b \\ & {C (y+ \alpha\nabla h(u)w)\leq d}\nonumber\,,
\end{align}
where $H(u)^T:=\begin{bmatrix}\mathbb{I}_p&\nabla h(u)^T\end{bmatrix}$ and $G: \calU \rightarrow \bbS^p_+$ is a continuous metric on $\calU$. Note that the evaluation of~\eqref{eq:sigma_hat} is computationally tractable and does not require an explicit computation of $h$, as $y = h(u)$ is available by measurement. Instead, it is enough to know $\nabla h$ (or an approximation thereof).

Intuitively, $\hat{\sigma}_{\alpha}(u,y)$ in~\eqref{eq:sigma_hat} can be interpreted as the projection of the point $u-\alpha\, G^{-1}(u) \nabla \left( \Phi(u, h(u)) \right)$ onto a linearization of $\tilde{\calU}$ around $u$, evaluated at the measurement $y=h(u)$. Furthermore, as $\alpha \searrow 0^+$, we recover a continuous-time projected gradient operator. For a formal derivation of this fact, the reader is referred to the appendix.

Otherwise, the control law~\cref{eq:feedbackupdate,eq:sigma_hat} resembles standard numerical optimization algorithms, e.g., SQP~\cite[Ch. 10]{sherali2006nonlinear},\cite[Ch. 18]{nocedal2006numerical}, where a non-convex problem as in~\eqref{eq:constrained_opt} is solved via a sequence of convex QPs, each in turn being solved by any off-the-shelf convex optimization method~\cite{sherali2006nonlinear,nocedal2006numerical}. 

The main differences to standard numerical optimization algorithms are that, on the one hand,~\eqref{eq:feedbackupdate} does not rely on a numerical evaluation of $h$. On the other hand, we cannot use line-search techniques to force convergence. Instead, we need to consider fixed step-sizes which do not in general lead to convergence unless additional assumptions are made.

For~\eqref{eq:feedbackupdate} to be well-defined, we require the following:
\begin{assumption}\label{ass:linearizedLICQ}
    For all $u\in\calU$, the feasible set of~\eqref{eq:sigma_hat} defined as $\linTildeU := \{ w \, | \, A (u+ \alpha w)\leq b, \, C (y+ \alpha\nabla h(u)w)\leq d \}$ is non-empty and satisfies LICQ for all $w \in \linTildeU$.
\end{assumption}
In particular, \cref{ass:linearizedLICQ} guarantees that $\widehat{\sigma}_\alpha(u, y)$ is single-valued for all $u \in \calU$ and $y = h(u)$ since~\eqref{eq:sigma_hat} is a convex program with strongly convex objective. Further, the LICQ assumption will enable us to apply \cref{thm:continuity_solution_map}.

Generally, \cref{ass:linearizedLICQ} is common in the study of SQP~\cite[Ch. 10]{sherali2006nonlinear},\cite[Ch. 18]{nocedal2006numerical},~\cite{torrisi2018projected}. Providing sufficient conditions for \cref{ass:linearizedLICQ} to hold, are part of ongoing work. Nevertheless, it is known that for large classes of perturbed optimization problems, LICQ holds generically~\cite{spingarn1979generic}, which is in particular true for the type of problems in our envisioned applications to power systems~\cite{hauswirth2018generic}. 

Finally, we assume compactness of the plant inputs:
\begin{assumption}\label{ass:basic}
    For~\eqref{eq:constrained_opt}, the set $\calU$ is compact.
\end{assumption}
The assumption that $\calU$ is compact is motivated by the fact that physical plants can generally only handle bounded inputs. This seems logical, for instance, from the viewpoint that any reasonable physical signal has finite energy.

From a theoretical perspective, compactness of $\calU$ allows us to exploit the extreme value theorem, i.e., every continuous function attains a maximum on a compact set. Whether this assumption can be relaxed remains an open question.

Our following main result guarantees global convergence of~\eqref{eq:feedbackupdate} to the set of first-order optimal points of~\eqref{eq:constrained_opt} for a small enough, fixed step-size.

\begin{theorem}\label{MainTheorem_Linearized}
Under \cref{ass:basic_feas,ass:linearizedLICQ,ass:basic}, consider~\eqref{eq:constrained_opt} and assume that $\nabla{\Phi}$ and $\nabla h$ are globally Lipschitz on $\calU$. Then, there exists an $\alpha^\star > 0$ such that for every $\alpha < \alpha^\star$
\begin{enumerate}[label=(\roman*)]
    \item\label{it:cvg} the trajectory of~\eqref{eq:feedbackupdate} for any $u^0 \in\calU$ and $y=h(u)$ converges to the set of first-order optimal points of~\eqref{eq:constrained_opt}, and
    \item\label{it:stab} if $u^\star$ is an asymptotically stable equilibrium of~\eqref{eq:feedbackupdate}, it is a strict local minimum of $\Tilde{\Phi}(u) := \Phi(u, h(u))$ on $\Tilde{\mathcal{U}}$.
\end{enumerate}
\end{theorem}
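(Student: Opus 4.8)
The plan is to treat the iteration~\eqref{eq:feedbackupdate} as an inexact projected-gradient method for the \emph{reduced cost} $\tilde{\Phi}(u):=\Phi(u,h(u))$, to derive a sufficient-decrease inequality for $\tilde{\Phi}$ along its trajectories, and then to close the argument with continuity and compactness. First I would record some book-keeping. Since $H(u)^{T}\nabla\Phi(u,h(u))^{T}=\nabla\tilde{\Phi}(u)^{T}$ by the chain rule, the objective in~\eqref{eq:sigma_hat} equals $\|w-(-G(u)^{-1}\nabla\tilde{\Phi}(u)^{T})\|^{2}_{G(u)}$, so $w^{k}:=\widehat{\sigma}_\alpha(u^{k},h(u^{k}))$ is exactly the $G(u^{k})$-metric projection of $-G(u^{k})^{-1}\nabla\tilde{\Phi}(u^{k})^{T}$ onto the convex polyhedron $\linTildeU$. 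By \cref{ass:linearizedLICQ} this set is non-empty and satisfies LICQ, so \cref{thm:continuity_solution_map} applies to~\eqref{eq:sigma_hat} — a parametric QP, parametrised continuously through $u$ via $\nabla\Phi(u,h(u))$, $\nabla h(u)$, and $G(u)$ — and shows that $u\mapsto\widehat{\sigma}_\alpha(u,h(u))$ is continuous on $\calU$, hence (by \cref{ass:basic}) bounded on $\calU$. Comparing the KKT conditions of~\eqref{eq:sigma_hat} at $w=0$ with \cref{theorem:KKT_condition_nonlinear_optimization_problem} applied to~\eqref{eq:constrained_opt} after eliminating $y=h(u)$, one sees that $0\in\linTildeU$ iff $u\in\tilde{\calU}$ and that, for such $u$, $\widehat{\sigma}_\alpha(u,h(u))=0$ iff $u$ is a first-order optimal (KKT) point of~\eqref{eq:constrained_opt}. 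Finally, the input constraint $A(u+\alpha w)\leq b$ in~\eqref{eq:sigma_hat} keeps every iterate inside the compact set $\calU$.

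The heart of the proof is a sufficient-decrease estimate along the iterates. The hypotheses on $\nabla\Phi$ and $\nabla h$ together with compactness of $\calU$ make $\nabla\tilde{\Phi}$ Lipschitz on $\calU$ with some constant $L$, so the Descent Lemma (\cref{descent_lemma}), applied along the segment $[u^{k},u^{k+1}]\subset\calU$, gives $\tilde{\Phi}(u^{k+1})\leq\tilde{\Phi}(u^{k})+\alpha\,\nabla\tilde{\Phi}(u^{k})w^{k}+\tfrac{L}{2}\alpha^{2}\|w^{k}\|^{2}$. The projection characterisation of $w^{k}$ yields $\nabla\tilde{\Phi}(u^{k})w^{k}\leq-\|w^{k}\|^{2}_{G(u^{k})}$ whenever $u^{k}$ is feasible (using the admissible direction $w=0$). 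In general $u^{k}$ may violate the output constraint, but a second-order Taylor expansion of $h$ combined with the linearised output constraint enforced at the previous step shows $\|(Ch(u^{k})-d)_{+}\|=O(\alpha^{2}\|w^{k-1}\|^{2})$; a metric-regularity (Hoffman-type) bound for the family $\{\linTildeU\}_{u\in\calU}$ — uniform thanks to \cref{ass:linearizedLICQ} and \cref{ass:basic} — then produces an admissible direction $w$ with $\|w\|_{G(u^{k})}=O(\alpha\|w^{k-1}\|^{2})$, so that $\nabla\tilde{\Phi}(u^{k})w^{k}\leq-\|w^{k}\|^{2}_{G(u^{k})}+c\,\alpha\|w^{k-1}\|^{2}$ for a constant $c$ independent of $\alpha$. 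Since $G$ is bounded below on $\calU$ by some $\underline{g}>0$, substituting back gives, for $k\geq1$, $\tilde{\Phi}(u^{k+1})\leq\tilde{\Phi}(u^{k})-\alpha(\underline{g}-\tfrac{L}{2}\alpha)\|w^{k}\|^{2}+c\,\alpha^{2}\|w^{k-1}\|^{2}$.

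Next I would sum this from $k=1$, absorbing each $\|w^{k-1}\|^{2}$ term into the matching $\|w^{k}\|^{2}$ term — legitimate once $\alpha$ is small, since it trades an $\alpha^{2}$ against an $\alpha$ — and use that $\tilde{\Phi}$ is bounded below on the compact set $\calU$ while $\tilde{\Phi}(u^{1})$ is finite; this gives $\sum_{k}\|w^{k}\|^{2}<\infty$, hence $w^{k}=\widehat{\sigma}_\alpha(u^{k},h(u^{k}))\to0$, for every $\alpha<\alpha^{\star}$ with $\alpha^{\star}$ determined by $\underline{g}$, $L$, and $c$. Claim~\ref{it:cvg} then follows: every accumulation point $\bar{u}$ of the bounded sequence $\{u^{k}\}$ lies in $\calU$ and, by continuity of $u\mapsto\widehat{\sigma}_\alpha(u,h(u))$, satisfies $\widehat{\sigma}_\alpha(\bar{u},h(\bar{u}))=0$, i.e.\ is a first-order optimal point of~\eqref{eq:constrained_opt}; since that set is closed and $\calU$ compact, the distance from $u^{k}$ to it tends to zero. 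For claim~\ref{it:stab}, an equilibrium $u^{\star}$ satisfies $\widehat{\sigma}_\alpha(u^{\star},h(u^{\star}))=0$, hence is feasible and KKT; if it were not a strict local minimum of $\tilde{\Phi}$ on $\tilde{\calU}$, there would exist $u_{j}\to u^{\star}$ in $\tilde{\calU}\setminus\{u^{\star}\}$ with $\tilde{\Phi}(u_{j})\leq\tilde{\Phi}(u^{\star})$, and for large $j$ the trajectory from $u^{0}=u_{j}$ would stay in the region of attraction, converge to $u^{\star}$, and hence satisfy $\tilde{\Phi}(u^{k})\to\tilde{\Phi}(u^{\star})$. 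Since $u_{j}$ is feasible the step-$0$ estimate is clean, so telescoping the decrease inequality over all $k\geq0$ gives $\tilde{\Phi}(u^{\star})\leq\tilde{\Phi}(u_{j})-c'\sum_{k}\|w^{k}\|^{2}$ with $c'>0$; together with $\tilde{\Phi}(u_{j})\leq\tilde{\Phi}(u^{\star})$ this forces $w^{k}\equiv0$, so $u^{k}\equiv u_{j}$ and $u_{j}=u^{\star}$ — a contradiction.

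The step I expect to be the main obstacle is the rigorous treatment of the \emph{temporary infeasibility}: establishing that a step of~\eqref{eq:feedbackupdate} incurs an output-constraint violation of order $\alpha^{2}\|w^{k-1}\|^{2}$ and — more delicately — that this violation can be corrected within the next linearised set $\linTildeU$ at a cost of order $\alpha\|w^{k-1}\|^{2}$ \emph{uniformly} in $u\in\calU$. This is precisely where \cref{ass:linearizedLICQ} and \cref{ass:basic} are used, via a uniform metric-regularity/Hoffman bound for $\{\linTildeU\}_{u\in\calU}$; the remaining steps are a fairly standard inexact-gradient-descent analysis.
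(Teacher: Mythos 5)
Your route is genuinely different from the paper's: you run a perturbed descent analysis on the reduced cost $\tilde{\Phi}$ alone, treating the iterates' infeasibility as a higher-order error to be absorbed via a uniform Hoffman bound, and you conclude by summability of $\sum_k\|w^k\|^2$ rather than by an invariance principle. The paper instead builds the exact $\ell_1$-penalty Lyapunov function $V(u)=\tilde{\Phi}(u)+\xi\sum_i\max\{0,C_ih(u)-d_i\}$ in~\eqref{eq:definition_V}, with $\xi$ an upper bound on the multipliers of the linearized output constraints (finite by \cref{thm:continuity_solution_map} and compactness), and shows $V(u^+)\le V(u)$ directly from the KKT stationarity and complementary slackness of the QP (\cref{aux_function_decrease}), then invokes the discrete-time LaSalle principle. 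The penalty construction is not a cosmetic difference: it is what makes the argument work \emph{globally}, because the stationarity condition~\eqref{eq:stationarity_KKT} bounds any \emph{increase} of $\tilde{\Phi}$ by $\sum_i\mu_i\max\{0,C_ih(u)-d_i\}$, which is then dominated by the decrease of the penalty term — no matter how large the current violation is.

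This is exactly where your argument has a genuine gap. Your cross-term estimate rests on two claims that fail for arbitrary $u^0\in\calU$: (a) that the violation $\|(Ch(u^k)-d)_+\|=O(\alpha^2\|w^{k-1}\|^2)$ is \emph{small}, and (b) that the constant $c$ in $\nabla\tilde{\Phi}(u^k)w^k\le-\|w^k\|^2_{G}+c\,\alpha\|w^{k-1}\|^2$ is independent of $\alpha$. The bound in (a) is literally \cref{lemma:transient_bound_violations}, but $\alpha^2\|w^{k-1}\|^2=\|u^k-u^{k-1}\|^2$ need only be $O(\mathrm{diam}(\calU)^2)$: when $u^{k-1}$ violates the output constraint by an $O(1)$ amount, the constraint $\alpha C\nabla h(u)w\le d-Ch(u)$ forces $\|w^{k-1}\|=O(1/\alpha)$, so the next iterate can again be $O(1)$-infeasible, and your quadratic "infeasibility recursion" only contracts once the violation is already below a threshold. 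For (b), your constant $c$ secretly contains $\sup_k\|w^k\|$ (through the term $\langle w^k+G^{-1}\nabla\tilde{\Phi}^T,\hat w\rangle_G$), which is $O(1/\alpha)$ in this regime; the "error" then has the same order in $\alpha$ as the decrease term and cannot be absorbed by shrinking $\alpha$. Concretely, during the initial phase the scheme may monotonically \emph{increase} $\tilde{\Phi}$ while reducing infeasibility, so no summability statement for $\tilde{\Phi}$ alone can hold; this is precisely the situation the exact penalty handles and your Lyapunov candidate does not. (Your argument is essentially sound once the iterates are near-feasible — in particular your proof of part~\ref{it:stab}, which starts from $u_j\in\tilde{\calU}$ and mirrors the paper's contradiction argument — but claim~\ref{it:cvg} for general $u^0\in\calU$ is not established. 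Separately, the uniformity over $u\in\calU$ of the Hoffman/metric-regularity constant is asserted rather than proved; \cref{ass:linearizedLICQ} gives LICQ pointwise at feasible points of $\linTildeU$, and passing to a uniform bound requires an additional compactness argument over active sets that you would need to supply.)
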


\section{Proof of~\cref{MainTheorem_Linearized}}\label{sec:proof_global_conv}
To study the closed-loop behavior in the input coordinates, we substitute $y=h(u)$ and $\Tilde{\Phi}(u) = \Phi(u, h(u))$ in ~\eqref{eq:feedbackupdate} and~\eqref{eq:sigma_hat}, which can be equivalently expressed as
\begin{align} \label{eq:outproj_approx}
    u^+ = u + \alpha\, \sigma_\alpha (u),\,\,\,\,u\in\calU,
\end{align}
with $\sigma_\alpha: \bbR^p \rightarrow \mathbb{R}^p$ defined as 
\begin{subequations} \label{eq:sigma_operator}
    \begin{align}\label{eq:sigma_objective}
            \sigma_\alpha(u) := \arg \min_{w\in\bbR^p} \quad &\left\| w+G^{-1}(u) \nabla \Tilde\Phi(u)^T  \right\|_{G(u)}^2\\\label{eq:lin_set_u}
            \text{subject to} \quad &A (u+ \alpha w)\leq b \\\label{eq:lin_set_y} & C (h(u)+ \alpha\nabla h(u)w)\leq d \,.
    \end{align}
\end{subequations}
Note that $\nabla \tilde{\Phi}$ is Lipschitz since $\nabla h$ and $\nabla \Phi$ are Lipschitz by assumption.

To prove (i) in \cref{MainTheorem_Linearized} we will apply the following invariance principle for discrete-time systems \cite[Thm 6.3]{lasalle1976stability}: 
\begin{theorem}\label{thm:lasalle_discrete}
    Consider a discrete-time dynamical system $u^+ = T(u)$, where $T: \calS \rightarrow \calS$ is well-defined and continuous, and $\calS \subset \bbR^p$ is closed. Further, let $V: \calS \rightarrow \bbR$ be a continuous function such that $V(T(u)) \leq V(u)$ for all $u \in \calS$. 
    Let $\mathbf{u} = \{ u_0, u_1, u_2, \ldots \} \subset \calS$ be a bounded solution. Then, for some $r \in V(\calS)$, $\mathbf{u}$ converges to the non-empty set that is the largest invariant subset of $V^{-1}(r) \cap \calS \cap \{ u \, | \, V(T(u)) - V(u) = 0 \}$.
\end{theorem}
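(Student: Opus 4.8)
The plan is to prove convergence through the \emph{positive limit set} of the trajectory, exploiting only the monotonicity of $V$ along orbits together with the continuity of $T$ and $V$. First I would observe that, since $\mathbf{u}$ is bounded and $\calS$ is closed, the closure $\overline{\{u_k\}}$ is a compact subset of $\calS$; hence the positive limit set $\Omega := \{ p \in \calS \mid u_{k_j} \to p \text{ for some } k_j \to \infty \}$ is non-empty by Bolzano--Weierstrass, and is compact with $\Omega \subset \calS$. I would also record the standard fact that the trajectory converges to $\Omega$ in the sense $\mathrm{dist}(u_k, \Omega) \to 0$: otherwise some subsequence would stay bounded away from $\Omega$, yet by compactness it would admit a convergent sub-subsequence whose limit lies in $\Omega$ by definition, a contradiction.

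Second I would use the hypothesis $V(T(u)) \le V(u)$. Along the orbit, $V(u_{k+1}) = V(T(u_k)) \le V(u_k)$, so the scalar sequence $\{V(u_k)\}$ is non-increasing; being bounded below because $V$ is continuous on the compact set $\overline{\{u_k\}}$, it converges to some $r \in \bbR$. For any $p \in \Omega$ with $u_{k_j} \to p$, continuity of $V$ gives $V(p) = \lim_j V(u_{k_j}) = r$, so $V \equiv r$ on $\Omega$. Since $\Omega$ is non-empty and $\Omega \subset \calS$, evaluating $V$ at any such $p$ shows $r \in V(\calS)$, which is exactly the level value appearing in the statement.

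Third I would establish that $\Omega$ is invariant under $T$. Given $p \in \Omega$ with $u_{k_j} \to p$, continuity of $T$ yields $u_{k_j+1} = T(u_{k_j}) \to T(p)$, so $T(p)$ is itself a limit of a subsequence of the orbit, i.e. $T(p) \in \Omega$; hence $T(\Omega) \subseteq \Omega$. Combining with the previous step, every $p \in \Omega$ satisfies $V(p) = r$ and $T(p) \in \Omega$, so $V(T(p)) = r$ and therefore $V(T(p)) - V(p) = 0$. Thus $\Omega \subseteq V^{-1}(r) \cap \calS \cap \{ u \mid V(T(u)) - V(u) = 0 \} =: E$, and since $\Omega$ is invariant it is contained in the largest invariant subset $M$ of $E$. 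In particular $M$ is non-empty, and because $\Omega \subseteq M$ we get $\mathrm{dist}(u_k, M) \le \mathrm{dist}(u_k, \Omega) \to 0$, i.e. $\mathbf{u}$ converges to $M$, as claimed.

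The main obstacle is conceptual rather than computational: everything hinges on precompactness of the trajectory, which is supplied precisely by the boundedness assumption together with the closedness of $\calS$, and which is what makes $\Omega$ non-empty and forces convergence to it. The invariance of $\Omega$ is the step where discrete time is actually easier than the continuous-time setting, since it requires only continuity of the single-valued map $T$ and no appeal to existence, uniqueness, or continuous dependence of solutions. The one point demanding care is verifying that the limiting value $r$ is genuinely attained, so that $r \in V(\calS)$ and $V^{-1}(r)$ is the correct level set; this is secured by evaluating $V$ on the non-empty set $\Omega$.
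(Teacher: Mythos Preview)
The paper does not supply its own proof of this theorem; it is quoted verbatim as \cite[Thm~6.3]{lasalle1976stability} and used as a tool. Your argument is the standard $\omega$-limit-set proof of the discrete-time LaSalle principle and is correct in substance: non-emptiness and compactness of $\Omega$ from boundedness and closedness of $\calS$, convergence of the orbit to $\Omega$, monotone convergence of $V(u_k)$ to a level $r$ attained on $\Omega$, and invariance of $\Omega$ placing it inside the largest invariant subset of the stated level set.

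One minor point worth tightening: you establish only forward invariance $T(\Omega)\subseteq\Omega$, whereas in LaSalle's formulation ``invariant'' is typically taken in the two-sided sense $T(\Omega)=\Omega$. The missing inclusion $\Omega\subseteq T(\Omega)$ follows by the same compactness device: for $p\in\Omega$ with $u_{k_j}\to p$, pass to a convergent subsequence of $\{u_{k_j-1}\}$ with limit $q\in\Omega$, and continuity of $T$ gives $T(q)=p$. With this addition, $\Omega$ is fully invariant and the containment $\Omega\subseteq M$ holds regardless of which notion of invariance is intended, so your conclusion $\mathrm{dist}(u_k,M)\to 0$ stands.
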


We first need to establish that the map $T(u) := u + \alpha\, \sigma_\alpha(u)$ is continuous in $u$ and $T(u) \in \calU$ for all $u \in \calU$.

\begin{lemma}\label{lem:closediteration}
    Under \cref{ass:basic_feas,ass:linearizedLICQ,ass:basic}, $u + \alpha\, \sigma_\alpha(u) \in \calU$ holds for all $u \in \calU$. 
\end{lemma}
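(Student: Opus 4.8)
The plan is to observe that the claim is, in essence, built into the definition of $\sigma_\alpha$: the input constraint in the quadratic program~\eqref{eq:sigma_operator} is exactly the defining inequality of $\calU$, shifted along the step $\alpha w$, so \emph{every} feasible point $w$ of~\eqref{eq:sigma_operator} already satisfies $u+\alpha w\in\calU$. Hence the only thing that really needs to be checked is that $\sigma_\alpha(u)$ is a well-defined element of that feasible set.

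First I would verify that $\sigma_\alpha(u)$ is single-valued for every $u\in\calU$. By \cref{ass:linearizedLICQ}, the feasible set $\linTildeU$ of~\eqref{eq:sigma_operator} (with $y=h(u)$ substituted, as in~\eqref{eq:lin_set_y}) is non-empty; it is also closed and convex, being a polyhedron. The objective $w\mapsto \| w+G^{-1}(u)\nabla\tilde{\Phi}(u)^T\|_{G(u)}^2$ is a quadratic form in $w$ with Hessian $2G(u)\in\bbS^p_+$, hence strongly convex and coercive. Minimizing a strongly convex, coercive function over a non-empty closed convex set therefore yields a unique minimizer, so $\sigma_\alpha(u)$ is well-defined and, in particular, $\sigma_\alpha(u)\in\linTildeU$.

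Second, since $\sigma_\alpha(u)\in\linTildeU$, it satisfies the input constraint~\eqref{eq:lin_set_u}, i.e., $A\big(u+\alpha\,\sigma_\alpha(u)\big)\leq b$. By the definition of $\calU$ this is precisely the statement $u+\alpha\,\sigma_\alpha(u)\in\calU$, which is the assertion of the lemma.

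I do not expect a genuine obstacle here; the only point requiring a word of care is the well-definedness of $\sigma_\alpha$, which is exactly where \cref{ass:linearizedLICQ} (non-emptiness of $\linTildeU$) and the strong convexity of the objective are used. It is also worth noting that \cref{ass:basic_feas,ass:basic} are not actually needed for this particular step — they are cited only because they are standing hypotheses of the paper — and that the argument relies essentially on $\calU$ being polyhedral, so that its ``linearization'' in~\eqref{eq:lin_set_u} is exact, in contrast to the genuinely linearized output constraint~\eqref{eq:lin_set_y}.
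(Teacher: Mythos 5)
Your argument is correct and is essentially the paper's own: the paper dispenses with this lemma in one line by noting that $\sigma_\alpha(u)$ satisfies the input constraint~\eqref{eq:lin_set_u} by definition, with the well-definedness of $\sigma_\alpha$ (via \cref{ass:linearizedLICQ} and strong convexity) already established in the discussion following that assumption. Your additional observations about which assumptions are actually needed and the exactness of the polyhedral input constraint are accurate but do not change the argument.
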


\cref{lem:closediteration} follows since $\sigma_\alpha(u)$ satisfies~\eqref{eq:lin_set_u} by definition.

\begin{lemma}\label{lemma_linearized_output_projection_continuity}
Under \cref{ass:basic_feas,ass:linearizedLICQ,ass:basic}, $\sigma_\alpha$ and the map of associated Lagrange multipliers are continuous in $u$. \end{lemma}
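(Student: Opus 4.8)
The plan is to realize $\sigma_\alpha$ as the solution map of a parametric optimization problem in the form of \eqref{eq:parametric_problem} and apply \cref{thm:continuity_solution_map} directly. Concretely, I would take the decision variable to be $w\in\bbR^p$, the parameter to be $\varepsilon := u$ (so $\calT = \calU \subset \bbR^p$), the objective to be $\Psi(w,u) := \| w + G^{-1}(u)\nabla\tilde\Phi(u)^T \|_{G(u)}^2$, and the constraint map $g(w,u)$ to be the stacked vector collecting the rows of $A(u+\alpha w) - b$ and of $C(h(u)+\alpha\nabla h(u)w) - d$, so that $\calX(u) = \linTildeU$ in the notation of \cref{ass:linearizedLICQ}. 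The conclusion of \cref{thm:continuity_solution_map} is exactly that the optimizer $w^\star(u) = \sigma_\alpha(u)$ and its multiplier $\mu^\star(u)$ are continuous in $u$, which is the assertion of the lemma.

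The key steps are then to verify the hypotheses of \cref{thm:continuity_solution_map} one by one. First, $g$ is affine in $w$ for each fixed $u$, so $\calX(u)$ is convex (and is nonempty by \cref{ass:linearizedLICQ}), and LICQ holds on $\calX(u)$ for every $u\in\calU$, again by \cref{ass:linearizedLICQ}. Second, the objective $\Psi(\cdot,u)$ is strongly convex in $w$: expanding the $G(u)$-norm gives a quadratic in $w$ with Hessian $2G(u)$, which is positive definite uniformly since $G$ maps into $\bbS^p_+$ (one may also invoke compactness of $\calU$ and continuity of $G$ to get a uniform lower bound on the eigenvalues, though pointwise strong convexity for each $u$ is all \cref{thm:continuity_solution_map} requires). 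Third, the regularity conditions: $\Psi$ and $g$ must be twice continuously differentiable in $w$ — they are, being polynomials of degree $\le 2$ in $w$ — and $\Psi, g, \nabla_w\Psi, \nabla_w g, \nabla^2_{ww}\Psi, \nabla^2_{ww}g$ must be continuous in $u$. Here $\nabla^2_{ww}g \equiv 0$, $\nabla^2_{ww}\Psi(w,u) = 2G(u)$ is continuous since $G$ is a continuous metric, $\nabla_w g$ involves only $A$, $C$, and $\nabla h(u)$ (continuous since $h$ is $C^1$), and $\nabla_w\Psi$ and $\Psi$ themselves are continuous in $u$ because $G$, $G^{-1}$ (continuity of matrix inversion on $\bbS^p_+$), $\nabla h$, and $\nabla\tilde\Phi = \nabla\Phi\cdot H$ are all continuous in $u$.

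The main obstacle — really the only non-bookkeeping point — is checking that the composite quantity $G^{-1}(u)\nabla\tilde\Phi(u)^T$ appearing inside the objective depends continuously on $u$, and more importantly that the problem genuinely fits the template of \eqref{eq:parametric_problem}, i.e. that treating $u$ as an abstract parameter $\varepsilon$ and forgetting its role as a plant input is legitimate. This is fine: nothing in \cref{thm:continuity_solution_map} requires the parameter space to be anything other than a subset of some $\bbR^r$, and we may take $\calT = \calU$. One should also note that \cref{thm:continuity_solution_map} yields continuity "around every $\varepsilon\in\calT$ and hence on all of $\calT$", so continuity of $\sigma_\alpha$ and $\mu^\star$ holds on all of $\calU$, as claimed. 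A minor subtlety to state cleanly is that the multiplier returned by \cref{thm:continuity_solution_map} is the one associated with the stacked constraint $g$, whose components correspond one-to-one with the rows of the input and output linearized constraints; continuity of "the map of associated Lagrange multipliers" then means continuity of this stacked multiplier vector, which is immediate.
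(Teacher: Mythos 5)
Your proposal is correct and follows essentially the same route as the paper: both treat~\eqref{eq:sigma_operator} as a parametric program with parameter $u$, verify the hypotheses of \cref{thm:continuity_solution_map} (strong convexity in $w$, non-emptiness and convexity of the feasible set, LICQ via \cref{ass:linearizedLICQ}, and continuity of the data and their $w$-derivatives in $u$), and conclude continuity of $\sigma_\alpha$ and the multipliers. You simply spell out the bookkeeping (continuity of $G^{-1}$, $\nabla h$, $\nabla\tilde\Phi$) in more detail than the paper does.
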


\begin{proof}
{We can apply~\cref{thm:continuity_solution_map} to problem~\eqref{eq:sigma_operator}, which is parametrized in $u\in\calU$}: For all $u\in\calU$, the functions in~\eqref{eq:sigma_operator} and their first and second derivatives with respect to $w$ are continuous in $u$, the objective is strongly convex in $w$, the feasible set of~\eqref{eq:sigma_operator} is non-empty and convex in $w$, and LICQ is satisfied for all feasible $w$ (by \cref{ass:linearizedLICQ}). {By Theorem 2, $\sigma_\alpha$ and the Lagrange multipliers are continuous in $u$}.\end{proof}

Moreover, the constraint violation committed at every iteration of~\eqref{eq:outproj_approx} can be bounded.
\begin{lemma}\label{lemma:transient_bound_violations}
Assume that $C_i\nabla h$ is $\ell_i$-Lipschitz for all $i=1,...,l$ on $\calU$. Given the iteration~\eqref{eq:outproj_approx} and any $u\in\calU$, we have $C_{i}h(u^+)-d_{i} \leq \tfrac{\ell_i}{2}||\alpha \sigma_\alpha(u)||^2$. \end{lemma}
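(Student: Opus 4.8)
The plan is to apply the Descent Lemma (\cref{descent_lemma}) to the scalar-valued function $x \mapsto C_i h(x)$, exploiting the linearized output constraint~\eqref{eq:lin_set_y} that $\sigma_\alpha(u)$ satisfies by construction. First I would note that the function $f_i := C_i h : \bbR^p \to \bbR$ is continuously differentiable with Jacobian $\nabla f_i(x) = C_i \nabla h(x)$, which by hypothesis is $\ell_i$-Lipschitz on $\calU$. Hence \cref{descent_lemma} applies with $f = f_i$, $L = \ell_i$, $x = u$, and $z = u^+ = u + \alpha\,\sigma_\alpha(u)$, giving
\begin{align*}
    C_i h(u^+) \;\leq\; C_i h(u) + \alpha\, C_i \nabla h(u)\,\sigma_\alpha(u) + \tfrac{\ell_i}{2}\|\alpha\,\sigma_\alpha(u)\|^2 \,.
\end{align*}

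Next I would invoke feasibility of $\sigma_\alpha(u)$ for problem~\eqref{eq:sigma_operator}: by~\eqref{eq:lin_set_y}, the $i$-th row of $C\bigl(h(u) + \alpha\,\nabla h(u)\,\sigma_\alpha(u)\bigr) \leq d$ reads $C_i h(u) + \alpha\, C_i \nabla h(u)\,\sigma_\alpha(u) \leq d_i$. Substituting this bound into the previous inequality yields $C_i h(u^+) \leq d_i + \tfrac{\ell_i}{2}\|\alpha\,\sigma_\alpha(u)\|^2$, and rearranging gives the claim $C_i h(u^+) - d_i \leq \tfrac{\ell_i}{2}\|\alpha\,\sigma_\alpha(u)\|^2$.

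There is no real obstacle here; the argument is short once one recognizes that the linearized constraint in~\eqref{eq:sigma_operator} is exactly a first-order model of $C_i h$ at $u$, so the only gap between the predicted output and the true output $C_i h(u^+)$ is the second-order Lipschitz remainder quantified by the Descent Lemma. The one point worth stating carefully is that $\sigma_\alpha(u)$ is well-defined and single-valued (so that the expression $\|\alpha\,\sigma_\alpha(u)\|^2$ makes sense) for every $u \in \calU$ — but this is already guaranteed under \cref{ass:linearizedLICQ} as noted after that assumption, so it can simply be cited.
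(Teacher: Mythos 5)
Your proof is correct and follows essentially the same route as the paper's: apply the Descent Lemma to $C_i h$ along the step $u^+ = u + \alpha\,\sigma_\alpha(u)$ and then substitute the linearized output constraint~\eqref{eq:lin_set_y} to cancel the first-order term. The only difference is that you spell out the substitution and the well-definedness of $\sigma_\alpha(u)$ explicitly, which the paper leaves implicit.
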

\begin{proof}
Using the Descent Lemma (\cref{descent_lemma}), the desired bound can obtained by inserting~\eqref{eq:lin_set_y} into
\begin{equation*}\label{eq:descent_lemma_Ch}
    C_{i}h(u^+)-C_{i}h(u)\leq\alpha C_{i}\nabla h(u)w+\tfrac{\ell_i}{2}||\alpha w||^2 \,. \qedhere
\end{equation*}
\end{proof}

\subsubsection*{Lyapunov Function}
Given $u\in\calU$, let $\mu_i^\star(u)$ be the Lagrange multiplier of \eqref{eq:sigma_operator} for the $i$th constraint of~\eqref{eq:lin_set_y} with $i=1,\ldots, l$. Since $\mu_i^\star(u)$ is continuous on the compact set $\calU$ (by \cref{lemma_linearized_output_projection_continuity}), there exists $\xi\geq \sup_{u \in \calU; i=1, \ldots, l} \{\mu^\star_i(u)\}$. We may consider the function $V:\mathbb{R}^p\rightarrow\mathbb{R}$, defined as
\begin{equation}
    V(u)=\Tilde{\Phi}(u)+\xi \left[ \textstyle\sum_{i=1}^l{\mathrm{max}\{0,C_{i}h(u)-d_i\}} \right],
    \label{eq:definition_V}
\end{equation}
which we show to be non-increasing along solutions of~\eqref{eq:outproj_approx}.

To prove this claim, note that~\eqref{eq:sigma_operator} is equivalent to solving
\begin{subequations}
    \begin{align}
            \arg \min_{w\in\bbR^p} \quad
            &\tfrac{\alpha}{2}{w}^TG(u)w+\alpha\nabla\Tilde{\Phi}(u)w\\
            \text{subject to} \quad 
            &\alpha Aw\leq b- Au \label{eq:input_QP_cstr} \\
            &\alpha C\nabla h(u)w\leq d-Ch(u) \,, \label{eq:linearized_QP_cstr}
    \end{align} \label{eq:rewritten_linearized_QP}
\end{subequations}
where we have multiplied the objective with $\alpha$ and ignored the constant term in the objective. Since~\eqref{eq:rewritten_linearized_QP} is convex, the KKT conditions are necessary and sufficient to certify optimality of a solution $w$ of \eqref{eq:rewritten_linearized_QP}. Namely, $w \in \bbR^p$ is a solution if~\eqref{eq:input_QP_cstr}-\eqref{eq:linearized_QP_cstr} are satisfied and, for some dual multipliers $\nu \in\bbR^q_{\geq0}$ and $\mu \in\bbR^l_{\geq0}$, \emph{stationarity} \begin{equation}
\alpha w^TG(u)+\alpha\nabla\Tilde{\Phi}+\alpha\nu^T A+\alpha \mu^TC\nabla h(u) =0
    \label{eq:stationarity_KKT}
\end{equation} 
holds and \emph{complementary slackness} is satisfied, i.e., for all $j=1,...,q$ and all $i=1,...,l$, we have 
\begin{align} \label{eq:complementary_KKT}
    \nu_{j}(\alpha A_{j}w-b_{j}+A_{j}u)&=0\\
    \mu_{i}(\alpha C_{i}\nabla h(u)w-d_{i}+C_{i}h(u))&=0.
\end{align}

\begin{lemma}\label{aux_function_decrease}
Under \cref{ass:basic_feas,ass:linearizedLICQ,ass:basic}, let $V$ be as in~\eqref{eq:definition_V}, where $\xi$ is the upper bound of the Lagrange multipliers of the constraints in~\eqref{eq:lin_set_y} on $\calU$. Further, assume that $\nabla \tilde{\Phi}$ is $L$-Lipschitz and $C_i\nabla h$ is $\ell_i$-Lipschitz for all $i=1,...,l$ on $\calU$. For~\eqref{eq:outproj_approx}, $V(u^+)\leq V(u)$ is satisfied for all $u \in \mathcal{U}$, if 
\begin{equation}\label{eq:alpha_bound}
    \alpha < \alpha^\star := \tfrac{2\lambda_{\mathrm{min}}(G(u))}{L+\xi\textstyle\sum_{i=1}^l \ell_i}.
\end{equation}
\end{lemma}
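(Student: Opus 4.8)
The plan is to bound $V(u^+)-V(u)$ directly, writing $w:=\sigma_\alpha(u)$ so that $u^+=u+\alpha w$, and splitting $V$ into its objective part $\tilde{\Phi}$ and its penalty part $\xi\sum_{i=1}^{l}\max\{0,\,C_ih(u)-d_i\}$. For the objective part, I would apply the Descent Lemma (\cref{descent_lemma}) to $\tilde{\Phi}$, whose derivative $\nabla\tilde{\Phi}$ is $L$-Lipschitz by hypothesis, with $x=u$ and $z=u^+$, giving $\tilde{\Phi}(u^+)\le\tilde{\Phi}(u)+\alpha\nabla\tilde{\Phi}(u)w+\tfrac{L}{2}\alpha^2\|w\|^2$. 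For the penalty part evaluated at $u^+$, \cref{lemma:transient_bound_violations} yields $C_ih(u^+)-d_i\le\tfrac{\ell_i}{2}\alpha^2\|w\|^2$; since the right-hand side is nonnegative, this also bounds $\max\{0,\,C_ih(u^+)-d_i\}$, so summing with weight $\xi$ contributes $\tfrac{\xi}{2}\bigl(\sum_{i=1}^{l}\ell_i\bigr)\alpha^2\|w\|^2$.

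The crux is to control the linear term $\alpha\nabla\tilde{\Phi}(u)w$ by exploiting optimality of $w$ for the QP~\eqref{eq:rewritten_linearized_QP}. I would take the KKT stationarity condition~\eqref{eq:stationarity_KKT}, solve it for $\alpha\nabla\tilde{\Phi}(u)$, and post-multiply by $w$ to obtain $\alpha\nabla\tilde{\Phi}(u)w=-\alpha w^TG(u)w-\alpha\nu^TAw-\alpha\mu^TC\nabla h(u)w$. Complementary slackness~\eqref{eq:complementary_KKT} then gives $\alpha\nu^TAw=\sum_{j}\nu_j(b_j-A_ju)$ and $\alpha\mu^TC\nabla h(u)w=\sum_{i=1}^{l}\mu_i(d_i-C_ih(u))$. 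Since $u\in\calU$ and $\nu\ge0$, the first sum is nonnegative, hence $-\alpha\nu^TAw\le0$; since $0\le\mu_i\le\xi$ --- which is exactly the bound $\xi$ provides via \cref{lemma_linearized_output_projection_continuity} together with compactness of $\calU$ --- the second gives $-\alpha\mu^TC\nabla h(u)w=\sum_{i=1}^{l}\mu_i(C_ih(u)-d_i)\le\xi\sum_{i=1}^{l}\max\{0,\,C_ih(u)-d_i\}$. Therefore $\alpha\nabla\tilde{\Phi}(u)w\le-\alpha w^TG(u)w+\xi\sum_{i=1}^{l}\max\{0,\,C_ih(u)-d_i\}$.

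Assembling the three estimates, the term $\xi\sum_{i=1}^{l}\max\{0,\,C_ih(u)-d_i\}$ produced in the last step cancels exactly against the penalty part of $V(u)$, so everything telescopes to $V(u^+)-V(u)\le-\alpha w^TG(u)w+\tfrac{\alpha^2}{2}\bigl(L+\xi\sum_{i=1}^{l}\ell_i\bigr)\|w\|^2$. Using $w^TG(u)w\ge\lambda_{\mathrm{min}}(G(u))\|w\|^2$, this is at most $\alpha\|w\|^2\bigl(\tfrac{\alpha}{2}(L+\xi\sum_{i=1}^{l}\ell_i)-\lambda_{\mathrm{min}}(G(u))\bigr)$, which is nonpositive precisely when $\alpha\le\alpha^\star$; hence $V(u^+)\le V(u)$ for every $\alpha<\alpha^\star$ and every $u\in\calU$ (and the $u$-dependence of $\alpha^\star$ is harmless downstream, since continuity of $G$ and compactness of $\calU$ give a uniform positive lower bound). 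I expect the linear-term step to be the main obstacle: it is the only place where optimality of $\sigma_\alpha(u)$ is invoked, and it requires care in tracking the $\alpha$-scaling between~\eqref{eq:sigma_operator} and~\eqref{eq:rewritten_linearized_QP} and in ensuring that the multiplier $\mu$ appearing in~\eqref{eq:complementary_KKT} is precisely the one bounded by $\xi$; the remaining steps are a routine assembly of the Descent Lemma and \cref{lemma:transient_bound_violations}.
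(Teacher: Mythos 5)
Your proposal is correct and follows essentially the same route as the paper's proof: Descent Lemma on $\tilde{\Phi}$, \cref{lemma:transient_bound_violations} for the penalty term, and the KKT stationarity/complementary-slackness manipulation of~\eqref{eq:rewritten_linearized_QP} to control $\alpha\nabla\tilde{\Phi}(u)w$. The only (harmless) difference is that you invoke $\mu_i\leq\xi$ one step earlier, so your final estimate drops the non-positive term $-\sum_{i=1}^{l}(\xi-\mu_i)\max\{0,C_ih(u)-d_i\}$ that the paper retains in~\eqref{eq:V_descent} --- not needed for this lemma, but used later in the convergence argument to conclude that $V(u^+)=V(u)$ forces $Ch(u)\leq d$.
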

\begin{proof}
In the following let $w := \sigma_\alpha(u)$. With the Descent Lemma (\cref{descent_lemma}), we can establish 
 \begin{align}\label{eq:taylor_phi}
    \Tilde{\Phi}(u^+)-\tilde{\Phi}(u)&\leq\alpha\nabla\Tilde{\Phi}(u)w+\tfrac{L}{2}||\alpha w||^2.
\end{align}
Further, from~\cref{lemma:transient_bound_violations}, we can derive
\begin{align}
    \mathrm{max}\{0, C_{i}h(u^+)-d_{i}\}&\leq \tfrac{\ell_i}{2}||\alpha w||^2 \,. \label{eq:taylor_li_max}
\end{align}
The following is inspired by the proof of~\cite[Lemma 10.4.1]{sherali2006nonlinear}. We take the inner product of~\eqref{eq:stationarity_KKT} and $w$, which results in $\alpha\nabla\Tilde{\Phi}(u)w=-\alpha w^TG(u)w-\textstyle\sum_{j=1}^{q}\alpha \nu_jA_jw-\textstyle\sum_{i=1}^{l}\alpha \mu_iC_i\nabla h(u)w$. Using~\eqref{eq:complementary_KKT}, we replace the summands, i.e., $\alpha\nabla\Tilde{\Phi}(u)w=-\alpha w^TG(u)w+\textstyle\sum_{j=1}^{q}\nu_{j}(A_{j}u-b_{j})+\textstyle\sum_{i=1}^{l}\mu_{i}(C_{i}h(u)-d_{i})$, which can be estimated as
\begin{multline}
    \alpha\nabla\Tilde{\Phi}(u)w 
    \leq -\alpha w^TG(u)w+\textstyle\sum_{i=1}^{l}\mu_{i}(C_{i}h(u)-d_{i})\\
    \leq -\alpha w^TG(u)w+\textstyle\sum_{i=1}^{l}\mu_{i}\max\{0,C_{i}h(u)-d_{i}\}.
    \label{eq:KKT_QP_estimated}
\end{multline}
Using \eqref{eq:taylor_phi}, \eqref{eq:taylor_li_max} and \eqref{eq:KKT_QP_estimated}, we obtain
\begin{align}\label{eq:V_descent}
    V(u^+)-V(u)&\leq-\alpha\lambda_{\mathrm{min}}(G(u))||w||^2\nonumber\\
    &+\tfrac{\alpha^2}{2} \left[ {L}+\xi\textstyle\sum_{i=1}^{l}\ell_i \right] ||w||^2\\
    &-\textstyle\sum_{i=1}^{l}(\xi-\mu_{i})\mathrm{max}\{0,C_{i}h(u)-d_{i}\}\nonumber.
\end{align}
Choosing $\alpha$ as in~\eqref{eq:alpha_bound} guarantees $V(u^+) \leq V(u)$.
\end{proof}

\subsubsection*{Convergence to first-order optimal points}
To show \ref{it:cvg} in \cref{MainTheorem_Linearized}, we can now simply apply \cref{thm:lasalle_discrete}. Namely, \cref{lem:closediteration,lemma_linearized_output_projection_continuity} guarantee that $T(u):=u+\alpha\,\sigma_\alpha(u)$ is continuous in $u$ and $\calU$ is invariant. By \cref{aux_function_decrease}, for the given $\alpha$ satisfying~\eqref{eq:alpha_bound}, the continuous function $V:\calU\rightarrow\bbR$ in~\eqref{eq:definition_V} is non-increasing along the trajectory of~\eqref{eq:outproj_approx} for all $u\in\calU$. The set $\calU$ being compact, we have $\mathrm{\mathbf{u}}=\{u_0,u_1,u_2,...\}\subset\calU$ being a bounded solution. Hence, for some $c\in V(\calU)$, the trajectory $\mathrm{\mathbf{u}}$ of~\eqref{eq:outproj_approx} converges to the largest invariant subset of $V^{-1}(c)\cap\{u\in\calU\,|\,V(u^+)-V(u)=0\}$.

Now, we show that \eqref{eq:outproj_approx} converges to the set of equilibrium points in $\Tilde{\mathcal{U}}$, that is $V(u^+)-V(u)=0$ implies that $u^{+}=u\in\Tilde{\mathcal{U}}$. For $V(u^+)-V(u)=0$, \eqref{eq:V_descent} reduces to
\begin{equation}
\begin{aligned}
    0&\leq \left(\tfrac{\alpha}{2} \left[ L+\xi\textstyle\sum_{i=1}^{l}\ell_i \right]- \lambda_{\mathrm{min}}(G(u)) \right) \alpha||w||^2\\
    &-\textstyle\sum_{i=1}^{l}(\xi-\mu_{i})\mathrm{max}\{0,C_{i}h(u)-d_i\}.
\end{aligned}
\label{eq:V_descent_equilibrium}
\end{equation}
For $\alpha$ as in~\eqref{eq:alpha_bound}, the right-hand side of \eqref{eq:V_descent_equilibrium} is negative for all $w\neq0$. It follows $w=0$ and $Ch(u)\leq d$, i.e., $u^{+}=u\in\Tilde{\mathcal{U}}$ is an equilibrium. Note that an uncertainty in $\nabla h$ does not affect the feasibility of equilibria (consider~\eqref{eq:rewritten_linearized_QP} for $w=0$).

Further, if $w^\star=0$ solves~\eqref{eq:rewritten_linearized_QP} at $u^\star$, and $\nu^\star, \mu^\star$ are the associated Lagrange multipliers, then the triplet $(u^\star, \nu^\star, \mu^\star)$ satisfies the first-order optimality conditions of {\eqref{eq:constrained_opt}}. In particular, $u^\star$ is feasible for~{\eqref{eq:constrained_opt}} and given LICQ\footnote{{\cref{ass:linearizedLICQ} implies LICQ of $\tilde{\calU}$ for all $u\in\tilde{\calU}$. This can be verified for $y=h(u)$, $w=0$ and $u\in\tilde{\calU}$.}}, the KKT conditions in \cref{theorem:KKT_condition_nonlinear_optimization_problem} are satisfied for~{\eqref{eq:constrained_opt}}.

\subsubsection*{Strict optimality of asymptotically stable equilibria}
For \ref{it:stab} in \cref{MainTheorem_Linearized}, the argumentation is similar to the proof of~\cite[Thm 5.5]{hauswirth2018projected}, albeit for a discrete-time system. Consider the neighborhood $\mathcal{N}(u^\star)\subset\mathcal{U}$ of $u^\star\in\tilde{\calU}$, such that any solution $\mathbf{u}$ to \eqref{eq:outproj_approx} starting at $u_0\in\mathcal{N}(u^\star)$ converges to $u^\star$. For the given $\alpha$, by \cref{aux_function_decrease}, we have $V(u_0)\geq V(u^\star)$, which implies either $\tilde{\Phi}(u_0)\geq\tilde{\Phi}(u^\star)$ or $u_0\not\in\tilde{\calU}$ or both. Hence, if $u_0\in\tilde{\calU}$, we have $V(u_0) = \Phi(u_0)$ and $\Tilde{\Phi}(u_0)\geq \Tilde{\Phi}(u^\star)$ follows. Since this reasoning applies to all $u_0\in\mathcal{N}(u^\star)$, it follows that $u^\star$ is a local minimizer of $\Tilde{\Phi}$ on $\Tilde{\mathcal{U}}$. 
To see that $u^\star$ is a strict local minimizer of $\Tilde{\Phi}$ on $\Tilde{\mathcal{U}}$, assume for the sake of contradiction that for an $\hat{u}\neq u^\star$ in the region of attraction $\mathcal{N}(u^\star)$ of $u^\star$, such that $\hat{u}\in\mathcal{N}(u^\star)\cap\Tilde{\mathcal{U}}$, it holds $\Tilde{\Phi}(\hat{u})=\Tilde{\Phi}(u^\star)$, and therefore (by feasibility) $V(\hat{u})=V(u^\star)$. Nevertheless, the solution $\mathbf{u}$ starting at $\hat{u}$ converges to $u^\star$ by assumption. Since for the given $\alpha$, by \cref{aux_function_decrease}, $V$ is non-increasing along the trajectory of~\eqref{eq:outproj_approx}, it follows $V(u^+)=V(u)$ for all iterates of the solution $\mathbf{u}$ starting at $\hat{u}$. However, as shown in the proof of \ref{it:cvg} in \cref{MainTheorem_Linearized}, $V(u^+)=V(u)$ implies that the point $u^+=u$ is an equilibrium point in $\tilde{\calU}$. Consequently, $u^\star$ cannot be asymptotically stable in $\mathcal{N}(u^\star)$.

\section{Numerical Example}\label{sec:numerical_simulation}
We illustrate the behavior of our control scheme in a small numerical example.
Namely, for $u\in\bbR^2$, $y\in\bbR$ and the map $y=h(u)=u_2^3+u_1-u_2+0.5$, we consider the minimization of $\Phi(u,y)=1.5u_1^2+u_2^2-u_2^3+u_1u_2-3u_2+1.5+y$ on $\calU:=[-1,1]^2$ and $\calY:=[0,1]$. To find a local solution, we use the feedback controller~\eqref{eq:feedbackupdate} {for $G\equiv \mathbb{I}_2$ and the exact measurement $y=h(u)$ of the system output. Implementations with other metrics (e.g., the Hessian metric, that yields a Newton flow) show a similar transient performance.}

\emph{Ease of tuning}: We start with a small fixed step-size $\alpha$ and gradually increase it. The left of Fig.~\ref{fig:planes} shows that all constraints are satisfied asymptotically and, depending on $\alpha$, {temporary} constraint violations can be made arbitrarily small (\cref{lemma:transient_bound_violations}). {Moreover, when considering the colored circles of the trajectories on the left of Fig.~\ref{fig:planes},} we can identify a trade-off between the magnitude of the {temporary} constraint violations and the convergence rate (in terms of iterations).

\emph{Comparison:} To highlight the benign convergence behavior of our approach, we compare it to a generic projected saddle-point scheme. Let the augmented Lagrangian of~\eqref{eq:constrained_opt}, when reformulated in the input coordinates, be defined as
\begin{align*}
    L(u,\mu)=\tilde{\Phi}(u)+\mu^T(Ch(u)-d)+\tfrac{\rho}{2}\max\{0,Ch(u)-d\}^2
\end{align*}
where $\rho\geq0$ is a fixed augmentation parameter. We consider the projected primal-dual scheme of the form
\begin{align*}
    u^+=P_\calU(u-\alpha\nabla_u L^T),\quad \mu^+=\max\{0,\mu+\gamma\nabla_\mu L^T\},
\end{align*}
where $P_\calU$ denotes the Euclidean projection on $\calU$, and $\alpha >0$ and $\gamma>0$ are separate primal and dual step-size parameters.

For the saddle-point simulations on the right of Fig.~\ref{fig:planes} and in Fig.~\ref{fig:objective_gamma_variation}, we have fixed $\alpha=0.01$ and vary $\gamma$ and $\rho$. Different values of $\rho$ and $\gamma$ result in different behaviors but with similar issues. For instance, a too large $\gamma$ does not guarantee convergence (see purple trajectory with $\gamma=5, \rho=1$). Further, a too large $\rho$ might jeopardize the stability of the controller (see orange trajectory with $\gamma=5, \rho=1000$). Hence, tuning with three parameters ($\alpha$, $\gamma$, $\rho$) can be challenging to manage various performance trade-offs. 

Moreover, the oscillatory nature of the trajectories reveals how saddle-point flows are not intended to satisfy constraints during transients. In contrast, our scheme demonstrates that the integration of the dual variable is not required to guarantee constraint satisfaction with zero asymptotic error, and it yields a benign convergence behavior without oscillations.

\begin{figure}[t] 
    	\centering
    	\includegraphics[scale=0.74]{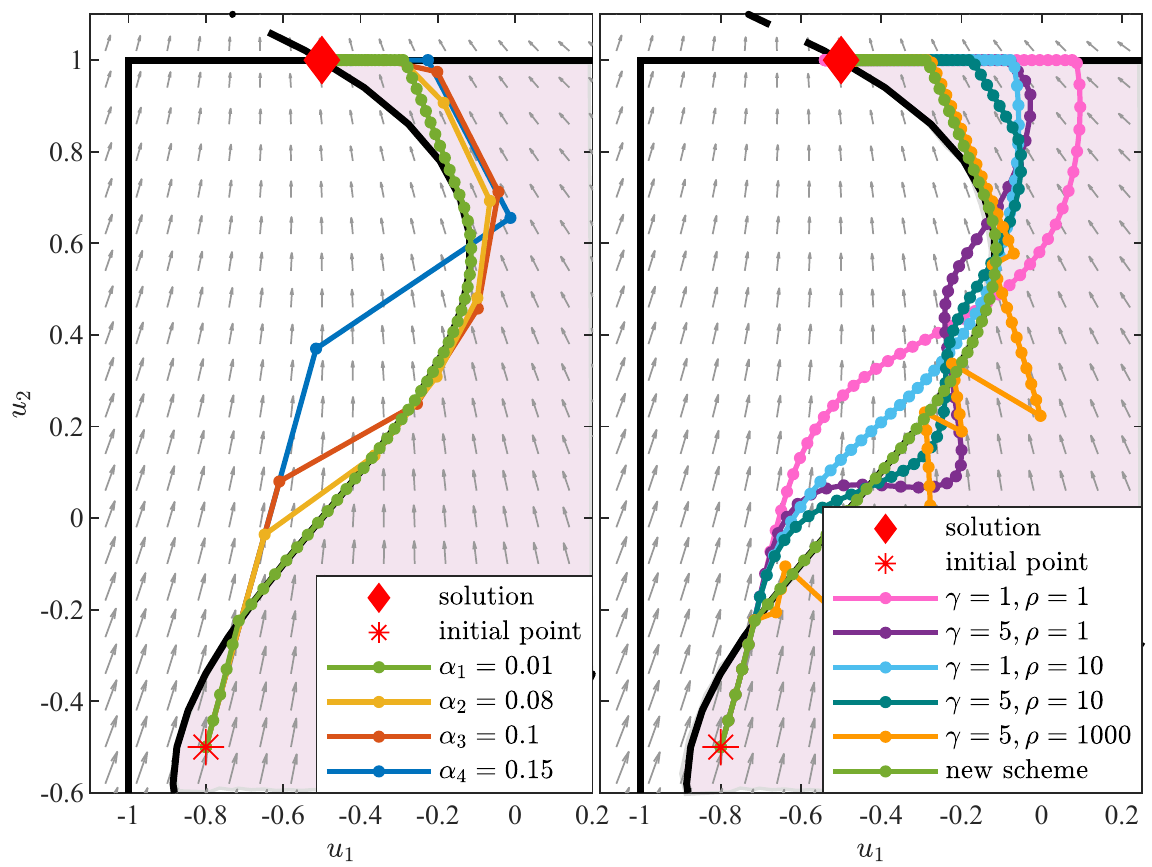}
    \caption{Trajectories in the $u_1u_2$-plane, where the vector field represents $-G(u)^{-1}\nabla\tilde{\Phi}(u)$. Left: new scheme for different $\alpha$. Right: (augmented) saddle-point scheme for {different $\gamma$ and $\rho$}. The colored patch represents $\tilde{\calU}$.}
    	\label{fig:planes}
\end{figure}
\begin{figure}[t] 
    	\centering
    	\includegraphics[scale=0.74]{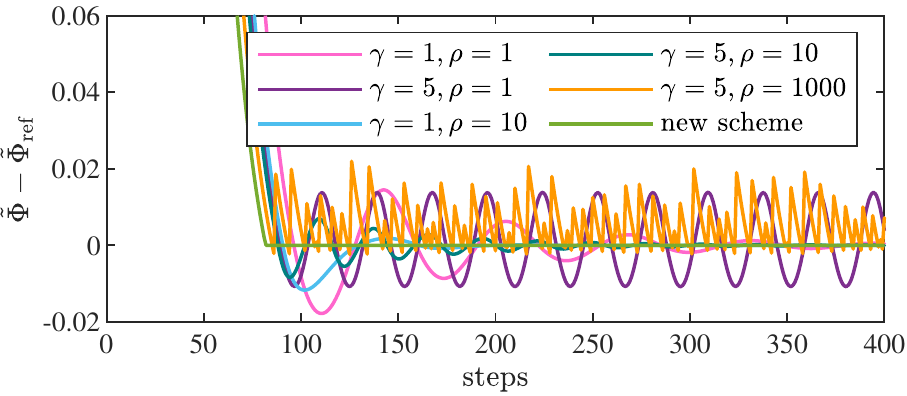}
    	\caption{Deviation of the objective value along the trajectories of the (augmented) saddle-point scheme for different $\gamma$ and $\rho$ from the reference value.}
    	\label{fig:objective_gamma_variation}
    	\vspace{-2mm}
\end{figure}

\section{Conclusions}\label{sec:conclusion}

We have proposed a new discrete-time controller for feedback optimization with the property that constraints on the steady-state plant outputs are enforced asymptotically. In contrast to saddle-point methods, the proposed scheme comes with global convergence guarantees for non-convex problems and fixed step-sizes, is easy to tune, and exhibits a well-behaved convergence behavior without oscillations. 

There is numerical and experimental evidence~\cite{ortmannExperimentalValidationFeedback2020} that the proposed method (and feedback optimization in general) is also very robust with respect to uncertainty in the input-output sensitivities.
A preliminary theoretical analysis of robustness is presented in~\cite{colombino2019towards}, where however no projection on the output constraint is considered. 

Ongoing research includes the relaxation of assumptions, the extension to more general constraint sets, the provision of convergence rates, and the study of the robustness in the output-constrained case. 

\appendix
In this appendix we discuss how the proposed discrete-time control law in~\eqref{eq:outproj_approx} is connected to a continuous-time projected gradient flow \cite{hauswirthProjectedgradientdescent2016,hauswirth2018projected,nagurneyProjectedDynamicalSystems1996}.

For this purpose, recall that the \emph{tangent cone} $T_u\tilde{\calU}$ of $\tilde{\calU}\subset\bbR^p$ at $u\in\tilde{\calU}$ is, informally speaking, the set of directions in which one can leave $u$ and remain in the set $\tilde{\calU}$. If $\tilde{\calU}$ takes the form in~\eqref{eq:tilde_U} and satisfies LICQ for all $u\in\tilde{\calU}$, then the \emph{tangent cone of $\tilde{\calU}$ at $u\in\tilde{\calU}$} is given by
    \begin{align}\label{eq:tangent_cone}
      T_u\tilde{\calU}:=\left\{w\in\mathbb{R}^p\,\middle|\,A_{\mathbf{I}^\calU_u}w\leq 0\,, C_{\mathbf{I}^\calY_{h(u)}} \nabla h(u)w\leq 0\right\}.
    \end{align}

Given $\tilde{\Phi}:\bbR^p\rightarrow\bbR$, a \emph{projected gradient flow for $\tilde{\Phi}$ on $\tilde{\calU}$} in \cite{hauswirthProjectedgradientdescent2016,hauswirth2018projected,nagurneyProjectedDynamicalSystems1996} is defined as
\begin{align}\label{eq:proj_grad}
    \dot u = \Pi_{\tilde{\calU}}^G \big[ -G^{-1}(u)\nabla\tilde{\Phi}(u)^T \big] (u),\,\,\,\,u\in\tilde{\calU},
\end{align}
where $\Pi^G_{\tilde{\calU}}[f](u)$ projects a vector field $f:\tilde{\calU}\rightarrow\bbR^p$ onto the tangent cone $T_u \tilde{\calU}$ of $\tilde{\calU}$ at the point $u\in\tilde{\calU}$, i.e.,
\begin{align}\label{eq:proj_operator} 
    \Pi^G_{\tilde{\calU}}[f](u) :=\arg \min_{w\in T_u\tilde{\calU}} ||w-f(u)||^2_{G(u)}\,.
\end{align}
\vspace{-1mm}
In particular, the gradient of $\tilde{\Phi}$ is projected in such a way that solution trajectories cannot leave the set $\tilde{\calU}$. 

A more rigorous treatment, including requirements for the existence and convergence of solutions for~\eqref{eq:proj_grad} can be found in~\cite{hauswirth2018projected, hauswirthProjectedgradientdescent2016, nagurneyProjectedDynamicalSystems1996} and references therein.

We can establish that $\sigma_{\alpha}$ converges to $\Pi^G_{\tilde{\calU}}$ as $\alpha\searrow0^+$:

\begin{lemma}\label{lemma:dt_to_ct}
Consider~\eqref{eq:outproj_approx} and~\eqref{eq:sigma_operator}, and let~\cref{ass:basic_feas,ass:linearizedLICQ,ass:basic} be satisfied. Then, for all $u \in \tilde{\calU}$, we have
\begin{align}\label{eq:lim_alpha_zero}
    \underset{\alpha \searrow 0^+}{\lim} \, \sigma_{\alpha}(u) = \Pi_{\tilde{\calU}}^G[-G^{-1}(u)\nabla\tilde{\Phi}(u)^T](u) \, .
\end{align}
\end{lemma}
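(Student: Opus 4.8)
The plan is to reformulate $\sigma_\alpha(u)$ so that only its feasible set depends on $\alpha$, and then to show that this feasible set grows monotonically onto the tangent cone $T_u\tilde{\calU}$ as $\alpha\searrow 0^+$; combined with uniqueness of the projection, this forces $\sigma_\alpha(u)$ to equal the claimed limit for all small enough $\alpha$.

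First I would fix $u\in\tilde{\calU}$ and set $f := -G^{-1}(u)\nabla\tilde{\Phi}(u)^T$. Since $A(u+\alpha w)\leq b \iff Aw\leq (b-Au)/\alpha$, and analogously for the output constraint, the feasible set of~\eqref{eq:sigma_operator} is exactly
\[
  K_\alpha := \big\{\, w\in\bbR^p \;\big|\; Aw\leq (b-Au)/\alpha,\ C\nabla h(u)\,w\leq (d-Ch(u))/\alpha \,\big\},
\]
so that $\sigma_\alpha(u)=\arg\min_{w\in K_\alpha}\|w-f\|^2_{G(u)}$, i.e.\ the objective no longer depends on $\alpha$. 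Each $K_\alpha$ is closed, convex, and non-empty (by \cref{ass:linearizedLICQ}, or simply because $0\in K_\alpha$ as $u\in\tilde{\calU}$).

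Next I would establish three structural facts about $\{K_\alpha\}_{\alpha>0}$: (a) it is monotonically non-decreasing as $\alpha\searrow 0^+$; (b) $K_\alpha\subseteq T_u\tilde{\calU}$ for every $\alpha>0$; and (c) $\bigcup_{\alpha>0}K_\alpha = T_u\tilde{\calU}$. Fact (a) holds because $b-Au\geq 0$ and $d-Ch(u)\geq 0$ (since $u\in\tilde{\calU}$), so the right-hand sides increase componentwise as $\alpha$ decreases. Fact (b) follows by restricting the inequalities defining $K_\alpha$ to the active indices $i\in\mathbf{I}^\calU_u$ and $i\in\mathbf{I}^\calY_{h(u)}$, where $b_i-A_iu=0$ and $d_i-C_ih(u)=0$, recovering exactly the homogeneous inequalities of $T_u\tilde{\calU}$ in~\eqref{eq:tangent_cone}. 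For fact (c), given $w\in T_u\tilde{\calU}$, the rows corresponding to inactive indices have strictly positive right-hand sides diverging to $+\infty$; since there are finitely many constraints, $w\in K_\alpha$ for all sufficiently small $\alpha$. Finally I would conclude: let $w^\star:=\Pi^G_{\tilde{\calU}}[f](u)$, which is single-valued since $T_u\tilde{\calU}$ is a non-empty closed convex set and $\|\cdot\|_{G(u)}$ is strictly convex. By (c) there is $\bar\alpha>0$ with $w^\star\in K_\alpha$ for all $\alpha<\bar\alpha$, hence $\|\sigma_\alpha(u)-f\|_{G(u)}\leq\|w^\star-f\|_{G(u)}$; by (b), $\sigma_\alpha(u)\in T_u\tilde{\calU}$, so the reverse inequality also holds. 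Equality plus uniqueness of the projection onto $T_u\tilde{\calU}$ give $\sigma_\alpha(u)=w^\star$ for all $\alpha<\bar\alpha$, which is~\eqref{eq:lim_alpha_zero} (in fact the map is eventually constant in $\alpha$).

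The main obstacle — really the only step beyond bookkeeping — is the set identity $\bigcup_{\alpha>0}K_\alpha=T_u\tilde{\calU}$, i.e.\ correctly matching the active/inactive constraint structure of $K_\alpha$ with the definition~\eqref{eq:tangent_cone}; this is precisely where the hypothesis $u\in\tilde{\calU}$ (rather than merely $u\in\calU$) is indispensable, since it is what makes the active output-constraint rows homogeneous. An alternative route via a parametric-stability theorem (e.g.\ \cref{thm:continuity_solution_map} with $\alpha$ as parameter) does not apply directly because the constraint data of $\sigma_\alpha$ blows up as $\alpha\to 0$, so the tangent-cone reformulation above is the natural path.
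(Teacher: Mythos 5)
Your proposal is correct and follows essentially the same route as the paper's proof: rescaling the constraints by $\alpha$, splitting them into active rows (which become the homogeneous inequalities defining $T_u\tilde{\calU}$ in~\eqref{eq:tangent_cone}) and inactive rows (whose right-hand sides diverge to $+\infty$), and concluding that the problem reduces to the projection~\eqref{eq:proj_operator}. Your sandwich argument via $K_\alpha\subseteq T_u\tilde{\calU}$ and $\bigcup_{\alpha>0}K_\alpha=T_u\tilde{\calU}$ makes rigorous the paper's informal step that the inactive constraints ``can be omitted,'' and in fact yields the slightly stronger conclusion that $\sigma_\alpha(u)$ is eventually constant in $\alpha$.
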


\begin{proof}
The equivalence in~\eqref{eq:lim_alpha_zero} can be proven by considering~\eqref{eq:sigma_operator}, where the constraints are separated based on whether they are active or inactive constraints of $u$ in~\eqref{eq:constrained_opt}, i.e.,
\begin{subequations}
\begin{align} \label{eq:objective_rearrange_linearized_output_discretization_d}
           { \underset{w\in\bbR^p}{\arg \min} }\quad 
            &{\left\| w+G^{-1}(u)\nabla\tilde{\Phi}(u)^T \right\| _{G(u)}^2}\\\label{eq:active_constraints1}
            {\text{subject to}} \quad  & {A_{\mathbf{I}^\calU_u}w\leq 0} \\ \label{eq:active_constraints2}
            &{C_{\mathbf{I}^\calY_{h(u)}} \nabla h(u) w\leq 0}\\  \label{eq:inactive_constraints1}
            &{A_{\Bar{\mathbf{I}}^\calU_u} w\leq \tfrac{1}{\alpha}(b-A_{\Bar{\mathbf{I}}^\calU_u}u)}\\ \label{eq:inactive_constraints2}
            &{C_{\Bar{\mathbf{I}}^\calY_{h(u)}}\nabla h(u) w\leq \tfrac{1}{\alpha}(d-C_{\Bar{\mathbf{I}}^\calY_{h(u)}}h(u)).}
\end{align}
\label{eq:sigma_operator_rearrange}
\end{subequations}
For $\alpha\searrow0^+$ and $u\in\tilde{\calU}$,~\eqref{eq:inactive_constraints1} and \eqref{eq:inactive_constraints2} can be omitted, since their right hand sides tend to $+\infty$. Hence, the constraint set can be reduced to~\eqref{eq:active_constraints1} and~\eqref{eq:active_constraints2}, denoting $T_u\tilde{\calU}$ in~\eqref{eq:tangent_cone}, and~\eqref{eq:objective_rearrange_linearized_output_discretization_d} - \eqref{eq:active_constraints2} is equivalent to the evaluation of~\eqref{eq:proj_grad}.\end{proof}
\cref{lemma:dt_to_ct} offers an initial insight on how \eqref{eq:outproj_approx} and \eqref{eq:proj_grad} are connected, however, it does not make a statement about the \emph{uniform approximation} of solutions of \eqref{eq:proj_grad} by solutions of \eqref{eq:outproj_approx} as $\alpha \searrow 0$. This conjecture remains an open problem.

\bibliographystyle{IEEEtran}
\bibliography{IEEEabrv,bibliography} 

\begin{thebibliography}{10}
\providecommand{\url}[1]{#1}
\csname url@samestyle\endcsname
\providecommand{\newblock}{\relax}
\providecommand{\bibinfo}[2]{#2}
\providecommand{\BIBentrySTDinterwordspacing}{\spaceskip=0pt\relax}
\providecommand{\BIBentryALTinterwordstretchfactor}{4}
\providecommand{\BIBentryALTinterwordspacing}{\spaceskip=\fontdimen2\font plus
\BIBentryALTinterwordstretchfactor\fontdimen3\font minus
  \fontdimen4\font\relax}
\providecommand{\BIBforeignlanguage}[2]{{%
\expandafter\ifx\csname l@#1\endcsname\relax
\typeout{** WARNING: IEEEtran.bst: No hyphenation pattern has been}%
\typeout{** loaded for the language `#1'. Using the pattern for}%
\typeout{** the default language instead.}%
\else
\language=\csname l@#1\endcsname
\fi
#2}}
\providecommand{\BIBdecl}{\relax}
\BIBdecl

\bibitem{jokicconstrainedsteadystateregulation2009}
A.~Jokic, M.~Lazar, and {van den P.P.J. Bosch}, ``On constrained steady-state
  regulation: Dynamic {{KKT}} controllers,'' \emph{IEEE Trans. Autom. Control},
  vol.~54, no.~9, pp. 2250--2254, 2009.

\bibitem{nelson_integral_2018}
Z.~E. Nelson and E.~Mallada, ``An integral quadratic constraint framework for
  real-time steady-state optimization of linear time-invariant systems,'' in
  \emph{American Control Conference (ACC)}, 2018.

\bibitem{colombinoOnlineOptimizationFeedback2019}
M.~Colombino, E.~{Dall'Anese}, and A.~Bernstein, ``Online optimization as a
  feedback controller: Stability and tracking,'' \emph{{IEEE} Trans. Control
  Netw. Syst.}, vol.~7, no.~1, pp. 422--432, Mar. 2020.

\bibitem{lawrence_optimal_2018}
L.~S.~P. Lawrence, J.~W. {Simpson-Porco}, and E.~Mallada, ``Linear-convex
  optimal steady-state control,'' \emph{arXiv:1810.12892}, 2018.

\bibitem{garciaOptimaloperationintegrated1984}
C.~E. Garcia and M.~Morari, ``Optimal operation of integrated processing
  systems: {{Part II}}: {{Closed}}-loop on-line optimizing control,''
  \emph{AIChE J.}, vol.~30, no.~2, pp. 226--234, Mar. 1984.

\bibitem{costelloModifierAdaptationConstrained2014}
S.~Costello, G.~Fran{\c c}ois, D.~Bonvin, and A.~Marchetti, ``Modifier
  adaptation for constrained closed-loop systems,'' \emph{IFAC Proceedings
  Volumes}, vol.~47, no.~3, pp. 11\,080--11\,086, Jan. 2014.

\bibitem{kellyRatecontrolcommunication1998}
F.~P. Kelly, A.~K. Maulloo, and D.~K.~H. Tan, ``Rate control for communication
  networks: Shadow prices, proportional fairness and stability,'' \emph{J.
  Oper. Res. Soc.}, vol.~49, no.~3, pp. 237--252, Mar. 1998.

\bibitem{lowInternetcongestioncontrol2002}
S.~H. Low, F.~Paganini, and J.~C. Doyle, ``Internet congestion control,''
  \emph{{IEEE} Control Syst. Mag.}, vol.~22, no.~1, pp. 28--43, Feb. 2002.

\bibitem{li2015connecting}
N.~Li, C.~Zhao, and L.~Chen, ``Connecting automatic generation control and
  economic dispatch from an optimization view,'' \emph{{IEEE} Trans. Control
  Netw. Syst.}, vol.~3, no.~3, pp. 254--264, Sep. 2015.

\bibitem{dallaneseOptimalPowerFlow2018}
E.~Dall'Anese and A.~Simonetto, ``Optimal power flow pursuit,'' \emph{{IEEE}
  Trans. Smart Grid}, vol.~9, no.~2, pp. 942--952, Mar. 2018.

\bibitem{hauswirthOnlineoptimizationclosed2017}
A.~Hauswirth, A.~Zanardi, S.~Bolognani, F.~D{\"o}rfler, and G.~Hug, ``Online
  optimization in closed loop on the power flow manifold,'' in \emph{{IEEE}
  PowerTech}, Manchester, UK, 2017.

\bibitem{molzahnSurveyDistributedOptimization2017}
D.~K. Molzahn, F.~D{\"o}rfler, H.~Sandberg, S.~H. Low, S.~Chakrabarti,
  R.~Baldick, and J.~Lavaei, ``A survey of distributed optimization and control
  algorithms for electric power systems,'' \emph{{IEEE} Trans. Smart Grid},
  vol.~8, no.~6, pp. 2941--2962, Nov. 2017.

\bibitem{ariyurRealtimeoptimization2003}
K.~B. Ariyur and M.~Krstic, \emph{Real Time Optimization by Extremum Seeking
  Control}, 1st~ed.\hskip 1em plus 0.5em minus 0.4em\relax {Hoboken, NJ}:
  {Wiley Interscience}, 2003.

\bibitem{mentaStabilityDynamicFeedback2018}
S.~Menta, A.~Hauswirth, S.~Bolognani, G.~Hug, and F.~D{\"o}rfler, ``Stability
  of dynamic feedback optimization with applications to power systems,'' in
  \emph{56th {{Annual Allerton Conference}} on {{Communication}}, {{Control}},
  and {{Computing}}}, 2018.

\bibitem{hauswirthTimescaleSeparationAutonomous2019}
A.~Hauswirth, S.~Bolognani, G.~Hug, and F.~D{\"o}rfler, ``Timescale separation
  in autonomous optimization,'' \emph{arXiv:1905.06291}, 2019.

\bibitem{hauswirthImplementationProjectedDynamical2020}
A.~Hauswirth, F.~D{\"o}rfler, and A.~Teel, ``On the implementation of projected
  dynamical systems with anti-windup controllers,'' in \emph{American Control
  Conference {(ACC)}}, 2020, accepted.

\bibitem{hauswirthAntiWindupApproximationsOblique2020}
A.~Hauswirth, F.~D{\"o}rfler, and A.~R. Teel, ``Anti-windup approximations of
  oblique projected dynamical systems for feedback-based optimization,''
  \emph{ArXiv200300478 MathOC}, 2020.

\bibitem{tangRealTimeOptimalPower2017}
Y.~Tang, K.~Dvijotham, and S.~Low, ``Real-time optimal power flow,''
  \emph{{IEEE} Trans. Smart Grid}, vol.~8, no.~6, pp. 2963--2973, Nov. 2017.

\bibitem{goebelStabilityrobustnesssaddlepoint2017}
R.~Goebel, ``Stability and robustness for saddle-point dynamics through
  monotone mappings,'' \emph{Systems \& Control Letters}, vol. 108, pp. 16--22,
  Oct. 2017.

\bibitem{cherukuriRoleConvexitySaddlePoint2017}
A.~Cherukuri, E.~Mallada, S.~Low, and J.~Cort{\'e}s, ``The role of convexity on
  saddle-point dynamics: {{Lyapunov}} function and robustness,'' \emph{IEEE
  Trans. Autom. Control}, vol.~63, no.~8, pp. 2449--2464, 2017.

\bibitem{sherali2006nonlinear}
M.~Bazaraa, H.~Sherali, and C.~Shetty, \emph{Nonlinear Programming: Theory and
  Algorithms}, 3rd~ed.\hskip 1em plus 0.5em minus 0.4em\relax Wiley, 2006.

\bibitem{nocedal2006numerical}
J.~Nocedal and S.~Wright, \emph{Numerical Optimization}, 2nd~ed.\hskip 1em plus
  0.5em minus 0.4em\relax Springer, 2006.

\bibitem{torrisi2018projected}
G.~Torrisi, S.~Grammatico, R.~S. Smith, and M.~Morari, ``A projected gradient
  and constraint linearization method for nonlinear model predictive control,''
  \emph{SIAM J. Control Optim.}, vol.~56, no.~3, pp. 1968--1999, 2018.

\bibitem{bertsekas1999nonlinear}
D.~P. Bertsekas, \emph{Nonlinear Programming}, 2nd~ed.\hskip 1em plus 0.5em
  minus 0.4em\relax Belmont, Massachusetts: Athena Scientific, 1999.

\bibitem{luenberger1984linear}
D.~G. Luenberger and Y.~Ye, \emph{Linear and Nonlinear Programming}.\hskip 1em
  plus 0.5em minus 0.4em\relax Springer, 1984, vol.~2.

\bibitem{wachsmuth2013licq}
G.~Wachsmuth, ``On {{LICQ}} and the uniqueness of {{Lagrange}} multipliers,''
  \emph{Oper. Res. Lett.}, vol.~41, no.~1, pp. 78--80, 2013.

\bibitem{jittorntrum1978sequential}
K.~Jittorntrum, ``Sequential algorithms in nonlinear programming,'' \emph{B.
  Aust. Math. Soc.}, vol.~19, no.~1, pp. 151--153, 1978.

\bibitem{spingarn1979generic}
J.~E. Spingarn and R.~T. Rockafellar, ``The generic nature of optimality
  conditions in nonlinear programming,'' \emph{Math. Oper. Res.}, vol.~4,
  no.~4, pp. 425--430, 1979.

\bibitem{hauswirth2018generic}
A.~Hauswirth, S.~Bolognani, G.~Hug, and F.~D{\"o}rfler, ``Generic existence of
  unique lagrange multipliers in ac optimal power flow,'' \emph{IEEE Control
  Systems Letters}, vol.~2, no.~4, pp. 791--796, 2018.

\bibitem{lasalle1976stability}
J.~LaSalle and Z.~Artstein, \emph{The Stability of Dynamical Systems}.\hskip
  1em plus 0.5em minus 0.4em\relax Philadelphia, PA: SIAM, 1976.

\bibitem{hauswirth2018projected}
A.~Hauswirth, S.~Bolognani, and F.~D{\"o}rfler, ``Projected dynamical systems
  on irregular, non-{{Euclidean}} domains for nonlinear optimization,''
  \emph{arXiv:1809.04831}, 2018.

\bibitem{ortmannExperimentalValidationFeedback2020}
L.~Ortmann, A.~Hauswirth, I.~Caduff, F.~D{\"o}rfler, and S.~Bolognani,
  ``Experimental validation of feedback optimization in power distribution
  grids,'' in \emph{Power {{Systems Computation Conference}} ({{PSCC}})}, 2020.

\bibitem{colombino2019towards}
M.~Colombino, J.~W. Simpson-Porco, and A.~Bernstein, ``Towards robustness
  guarantees for feedback-based optimization,'' \emph{arXiv preprint
  arXiv:1905.07363}, 2019.

\bibitem{hauswirthProjectedgradientdescent2016}
A.~Hauswirth, S.~Bolognani, G.~Hug, and F.~D{\"o}rfler, ``Projected gradient
  descent on {{Riemannian}} manifolds with applications to online power system
  optimization,'' in \emph{54th {{Annual Allerton Conference}} on
  {{Communication}}, {{Control}}, and {{Computing}}}, 2016.

\bibitem{nagurneyProjectedDynamicalSystems1996}
A.~Nagurney and D.~Zhang, \emph{Projected {{Dynamical Systems}} and
  {{Variational Inequalities}} with {{Applications}}}, 1st~ed.\hskip 1em plus
  0.5em minus 0.4em\relax {Springer}, 1996.

\end{thebibliography}

\end{document}